\definecolor{blue}{RGB}{0,0,255}
\definecolor{red}{RGB}{255,0,0}
\definecolor{purple}{RGB}{102,0,204}
\def\spacingset#1{\renewcommand{\baselinestretch}
{#1}\small\normalsize} \spacingset{1}
\DeclareMathOperator*{\argmin}{arg\,min}
\newtheorem{proposition}{Proposition}
\newtheorem{lemma}{Lemma}
\newcommand{\indep}{\perp \!\!\! \perp}
\newcommand{\cov}{\mbox{Cov}}
\newcommand{\Cov}{\mbox{Cov}}
\newcommand{\dCov}{\mbox{dCov}} 
\newcommand{\dVar}{\mbox{dVar}}
\newcommand{\dStd}{\mbox{dStd}}
\newcommand{\dCor}{\mbox{dCor}}
\newcommand{\E}{\mathbb{E}}
\newcommand{\bX}{\boldsymbol{X}}
\newcommand{\bXn}{\boldsymbol{X_n}}
\newcommand{\bYn}{\boldsymbol{Y_n}}
\newcommand{\bZn}{\boldsymbol{Z_n}}
\newcommand{\bP}{\boldsymbol{P}}
\newcommand{\bA}{\boldsymbol{A}}
\newcommand{\bhA}{\boldsymbol{\widehat{A}}}
\newcommand{\bSn}{\boldsymbol{S_n}}
\newcommand{\bSnk}{\boldsymbol{S_{n,k}}}
\newcommand{\bSnl}{\boldsymbol{S_{n,\ell}}}
\newcommand{\bSnlplus}{\boldsymbol{S_{n,\ell_{+}}}}
\newcommand{\bSnkplus}{\boldsymbol{S_{n,k_{+}}}}
\newcommand{\bI}{\boldsymbol{I}}
\newcommand{\calJ}{\mathcal{J}}
\newcommand{\bQ}{\boldsymbol{Q}}
\newcommand{\bS}{\boldsymbol{S}}
\newcommand{\bs}{\boldsymbol{s}}
\newcommand{\bU}{\boldsymbol{U}}
\newcommand{\bUo}{\boldsymbol{U_0}}
\newcommand{\bUh}{\boldsymbol{\widehat{U}}}
\newcommand{\bW}{\boldsymbol{W}}
\newcommand{\bY}{\boldsymbol{Y}}
\newcommand{\bZ}{\boldsymbol{Z}}
\newcommand{\bx}{\boldsymbol{x}}
\newcommand{\by}{\boldsymbol{y}}
\newcommand{\bhSigma}{\boldsymbol{\widehat{\Sigma}}}
\newcommand{\btheta}{\boldsymbol{\theta}}
\newcommand{\bphi}{\boldsymbol{\phi}}
\newcommand{\bhtheta}{\boldsymbol{\widehat{\theta}}}
\newcommand{\Op}{\mathcal{O}_{\text{P}}}
\newcommand{\op}{o_{\text{P}}}
\newcommand{\biloop}{\mathbin{\rotatebox[origin=c]{-90}{$\scriptscriptstyle0$}\rotatebox[origin=c]{-90}{$\scriptscriptstyle0$}}} 
\title{Independent Component Analysis\\ by
       Robust Distance Correlation}
\author[1]{Sarah Leyder}
\author[1]{Jakob Raymaekers}
\author[2]{Peter J. Rousseeuw}
\author[1]{\\Tom Van Deuren}
\author[1]{Tim Verdonck}
\affil[1]{Department of Mathematics, University of 
          Antwerp, Belgium}
\affil[2]{Section of Statistics and Data Science, Department 
          of Mathematics, KU Leuven, Belgium}
\date{May 14, 2025} 
\begin{document}

\hyphenation{FastICA}
\hyphenation{biloop}
\hyphenation{RICA}
\hyphenation{dCovICA}

\spacingset{1.5}

\maketitle

\large

\abstract{Independent component analysis (ICA) 
is a powerful tool for decomposing a multivariate
signal or distribution into truly independent 
sources, not just uncorrelated ones. Unfortunately, 
most approaches to ICA are not robust against 
outliers. Here we propose a robust ICA method 
called RICA, which estimates the components by 
minimizing a robust measure of dependence 
between multivariate random variables.
The dependence measure used is the distance
correlation (dCor). In order to make it
more robust we first apply a new transformation 
called the bowl transform, which is bounded,
one-to-one, continuous, and maps far outliers 
to points close to the origin. 
This preserves the crucial property that 
a zero dCor implies independence.
RICA estimates the independent
sources sequentially, by looking for the
component that has the smallest dCor with
the remainder. RICA is strongly consistent and 
has the usual parametric rate of convergence.
Its robustness is investigated by a simulation 
study, in which it generally outperforms
its competitors. The method is illustrated on 
three applications, including the well-known 
cocktail party problem.
}

\section{Introduction}
\label{sec:introduction}

Independent component analysis (ICA) is a popular and 
powerful technique in statistical signal processing,
originating from the work of \cite{Ans1985} 
and \cite{comon1994independent}; for a review see
the book of \cite{Hyv2001book}. ICA bears a
superficial resemblance to principal component
analysis (PCA), but there are important differences.
The goal of PCA is to rotate the coordinate axes
to obtain a coordinate system in which the variables 
become uncorrelated. The new variables are then
called principal components, after which one may 
reduce the data dimension by keeping only the 
principal components with the highest variance.
On the other hand, ICA looks for a coordinate
system in which the variables, called {\it sources},
are independent. This is quite different 
because uncorrelated variables are not necessarily 
independent, so for ICA it does not suffice to know 
the covariance matrix of the data. 
Another crucial difference is that ICA allows to 
transform the coordinate axes by a general linear 
transformation, whereas PCA was restricted to 
orthogonal transformations.
Because the dependence structure of a multivariate
Gaussian distribution is fully determined by its 
nonsingular covariance matrix, which can be turned 
into any other positive definite matrix by a linear 
data transformation, ICA can only detect 
non-Gaussian independent components.

Taken together, the above aspects make 
ICA more complex than PCA. But it has many 
important applications in statistics and 
engineering. Several methodological tools depend 
on ICA, including blind source separation 
\citep{cardoso1998,comon2010handbook}, 
causal discovery \citep{shimizu2006linear}, 
feature extraction, preprocessing, 
artifact removal, and noise reduction. 
Tools based on ICA have been applied 
across various fields 
such as biomedical sciences, audio and image 
preprocessing, biometrics, and finance. For 
application overviews see e.g. 
\cite{tharwat2021ica} 
and \cite{naik2011overview}. In such 
applications the independent sources often have 
natural interpretations. In 
Section~\ref{sec:real data examples} we will 
illustrate this with
image data, audio data, and periodic data.

The basic formulation of the ICA setting goes as 
follows. Suppose we observe $n$ i.i.d. observations 
$\bx_1, \ldots, \bx_n$ which are realizations of a 
$d$-dimensional random vector $\bX$. We denote the
$n \times d$ data matrix by $\bX_n$. 
The independent component model assumes that the
random vector $\bX$ can be written as
\begin{equation} \label{eq:X=AS}
    \bX = \bA \bS
\end{equation}
where $\bS$ is an unobserved $d$-variate 
non-Gaussian random vector whose components are 
mutually independent. Here $\bA$ is an unknown 
non-singular $d \times d$ matrix, called the 
\textit{mixing matrix}. Only the linear mixture 
$\bX$ is observable, whereas $\bA$ and $\bS$ are
both unknown. Given the data $\bx_1,\ldots,\bx_n$\,, 
the goal of ICA is to estimate the mixing matrix 
$\bhA$ which can then be used to ``unmix'' the 
observed $\bx_i$ by computing 
$\bs_i = \bhA^{\,-1} \bx_i$ for 
$i = 1, \ldots, n$. Note that the scale, sign and 
order of the components of $\bS$ are not identifiable 
from the distribution of $\bX$. Therefore we aim to 
recover $\bA$ up to multiplication by a signed 
permutation matrix and a diagonal scale matrix.

Several generalizations of the model \eqref{eq:X=AS} 
exist. For instance, some consider a ``noisy'' version
which adds some extra Gaussian noise to $\bX$, 
see e.g. \cite{voss2015}. Other 
generalizations allow some of the components in $\bS$ 
to be Gaussian, and target the identification of the 
non-Gaussian components only
\citep{nordhausen2018independent}.

Various approaches exist for separating the mixed 
sources that created $\bX$. A first set of algorithms 
uses measures of non-Gaussianity to extract $\bS$, 
such as kurtosis or approximations of negentropy
\citep{hyvarinen1997original,hyvarinen1999robust},
resulting in 
the renowned FastICA algorithm of 
\cite{hyvarinen2000independent}. A second approach 
for extracting the independent sources is 
to maximize entropy, as done by the infomax 
\citep{bell1995infomax} and extended infomax 
\citep{lee1999extendedinfomax} methods. A third 
approach for recovering the independent sources is
to jointly diagonalize fourth order cumulants, 
resulting in the JADE algorithm of 
\cite{cardoso1993jade,miettinen2015fourth}.

Although these ICA methods have demonstrated success 
and broad applicability, they generally exhibit 
a high sensitivity to outliers. One reason is that  
classical statistical estimators of kurtosis and 
higher-order cumulants are known to be heavily 
influenced by outliers. The effect of outliers on
an estimator can be measured by its influence 
function \citep{robstatIF86}, which ideally should 
be bounded. 
But FastICA has an unbounded influence function,  
regardless of the chosen contrast function 
\citep{hyvarinen1999robust,ollila2009IFfastica,
nordhausen2011deflation}. 
The issue is that outliers can generate new 
dependencies that violate the underlying independent 
component model. Therefore we cannot expect to recover 
the sources perfectly, but we can try to construct a
more robust ICA method that mitigates the bias
caused by outliers as much as possible. 

Some studies have addressed the robustness of ICA 
algorithms. \cite{hyvarinen1999robust} analyzed 
the statistical properties of different contrast 
functions within the FastICA algorithm, leading to 
recommendations to achieve more robustness. While 
kurtosis is very sensitive to outliers, he
proposed bounded alternative contrast functions, 
thereby reducing the effect of extreme values.
These findings were supported by a simulation study. 
Building on this, \cite{brys2005robustification} 
extended FastICA by adding a preprocessing 
step using an outlier rejection rule. This removes 
gross outliers, but makes the method less efficient
when the data is not contaminated by outliers. 
\cite{RADICAL} proposed the RADICAL method based on 
a nonparametric entropy estimator. While certainly 
more robust than FastICA, the robustness of RADICAL 
deteriorates substantially with increasing 
dimension and number of outliers. 
\cite{chen2013gammaica} proposed a robust approach to 
ICA using $\gamma$-divergence. Their method, termed 
$\gamma$-ICA, aims to leverage the robustness 
properties of $\gamma$-divergence against 
outliers, and is consistent when the components of 
$\bS$ are symmetrically distributed. However, 
$\gamma$-ICA requires the selection of the tuning 
parameter $\gamma$, and the empirical evaluation 
of its robustness was somewhat limited.

In this work we propose a new robust ICA method. 
It is based on the dCovICA method of 
\cite{dCovICAmatteson2017}, which minimizes the 
dependence between the recovered signals. 
For this purpose it measures dependence by the 
distance covariance of \cite{szekely2007dcor}. 
Although this measure of dependence offers some 
inherent resistance against outliers, it was 
recently shown in \citep{leyder2024distance} that 
its robustness against increasing numbers of gross 
outliers can be improved. This was achieved by 
subjecting the data to a new type of transformation
before computing the distance covariance. In the 
current paper we extend this approach from 
univariate to multivariate variables. We then 
use it to construct RICA, a \textbf{R}obust 
\textbf{ICA} procedure, which operates by 
minimizing this robust dependence measure 
between a potential new source and the 
remainder. 

The paper is structured as 
follows. Section~\ref{sec:prelim} introduces some
preliminary concepts needed to construct RICA. 
Section~\ref{sec:methodology} provides a detailed 
description of the RICA methodology. Next, 
Section~\ref{sec:theoretical analysis} formulates
theoretical properties of the method. The 
extensive simulation results presented in 
Section~\ref{sec:simulations} demonstrate the 
enhanced robustness of RICA against outliers. 
Section~\ref{sec:real data examples} showcases 
some examples on real datasets, and  
Section~\ref{sec:discussion} concludes.

\section{Preliminaries}
\label{sec:prelim}
We first introduce the background concepts on 
which we build our proposal. 

\subsection{Whitening}
\label{subsec:whitening}
The problem of identifying the {\it unmixing matrix}
$\bA^{-1}$ in the ICA problem can be simplified 
somewhat by first decorrelating the observed data 
$\bX_n$\,. This process is known as ``whitening''. 
For this one typically constructs a nonsingular
$d \times d$ matrix $\bW$ such that $\bZ = \bW \bX$ 
has uncorrelated components with unit variances, 
i.e. $\cov(\bZ) = \bI$. Then we can write
\begin{equation*}
    \bS = \bA^{-1}\bX = \bA^{-1}\bW^{-1}\bZ = \bU \bZ,
\end{equation*}
where $\bU = \bA^{-1}\bW^{-1}$ is called the 
{\it separating matrix}. Let us assume without loss
of generality that $\Cov(\bS) = \bI$. Then the 
separating matrix is an orthogonal matrix because
\begin{equation*}
   \bI = \cov(\bS) = \bU \cov(\bZ) \bU^T = \bU \bU^T.
\end{equation*}
Orthogonal $d \times d$ matrices only have 
$d(d-1)/2$ free elements instead of $d^2$, as will
be described in Section~\ref{subsec:optimization}.
Therefore whitening reduces the dimension of 
the search space for $\bU$ to roughly half of 
the size of the unmixing matrix $\bA^{-1}$.

\subsection{Distance covariance and correlation}
\label{subsec:dcovdcor}

The distance covariance (dCov) was introduced 
by \cite{szekely2007dcor} as a measure of general 
dependence between random vectors. For random 
vectors $\bX$ and $\bY$ with 
finite second moments it is defined as
\begin{equation*}
  \dCov(\bX,\bY) = \E[ ||\bX-\bX^{\prime}||\,
  ||\bY-\bY^{\prime}||] + \E[||\bX-\bX^{\prime}||]
  \E[||\bY-\bY^{\prime}||] 
	-2 \E[ ||\bX-\bX^{\prime}||\,
  ||\bY-\bY^{\prime\prime}||],
\end{equation*}
where $(\bX',\bY')$ is an independent copy of 
$(\bX,\bY)$, $\bY''$ is another independent copy
of $\bY$, and $||\ . \ ||$ is the Euclidean norm.
The distance covariance is a measure of dependence 
that is always nonnegative, and it is zero if and 
only if the random vectors are independent:
\begin{equation*}
   \dCov(\bX,\bY) = 0 \iff \bX \indep \bY\;,
\end{equation*}
where $\indep$ stands for ``independent of''. 
The arrow $\Leftarrow$ holds for the usual 
covariance as well, but definitely not the arrow 
$\Rightarrow$ that is the main strength of dCov.

\cite{szekely2007dcor} also defined the {\it distance
variance}, {\it distance standard deviation}, and
{\it distance correlation}, given by
\begin{align*}
    \dVar(\bX) &= \dCov(\bX,\bX)\\ 
    \qquad \dStd(\bX) &= \sqrt{\dVar(\bX)}\\
    \dCor(\bX,\bY) &=  
    \frac{\dCov(\bX,\bY)}{\dStd(\bX) \dStd(\bY)}.
\end{align*}
The distance correlation dCor has the nice property 
that it always lies in the interval $[0,1]$. This 
makes it more easily interpretable than dCov, which 
depends on the magnitude and units of the data. 
The empirical distance covariance of multivariate
datasets $(\bX_n,\bY_n)$ is given by
\begin{equation*}
   \dCov_n(\bXn,\bYn) = T_{1,n}(\bXn,\bYn) +
   T_{2,n}(\bXn,\bYn) - T_{3,n}(\bXn,\bYn)
\end{equation*}
with the following terms: 
\begin{align*}
T_{1,n}(\bXn,\bYn) &= \binom{n}{2}^{-1} 
  \sum_{i<j} 
	 ||\bx_i - \bx_j||\,||\by_i - \by_j||\;,\\
T_{2,n}(\bXn,\bYn) &= \left[ \binom{n}{2}^{-1}
  \sum_{i<j} ||\bx_i - \bx_j|| \right] 
	\left[ \binom{n}{2}^{-1} 
	\sum_{i<j} ||\by_i - \by_j||\right],\\
T_{3,n}(\bXn,\bYn) &= \binom{n}{3}^{-1} 
  \sum_{i<j<k}
  \Big[||\bx_i - \bx_j||\,||\by_i - \by_k||
	 + ||\bx_i - \bx_k||\,||\by_i - \by_j||
	 + ||\bx_i - \bx_j||\,||\by_j - \by_k|| \\ 
	&\qquad \qquad \qquad
	 + ||\bx_j - \bx_k||\,||\by_i - \by_j||
	 + ||\bx_i - \bx_k||\,||\by_j - \by_k||
	 + ||\bx_j - \bx_k||\,||\by_i - \by_k||\Big]\,.
\end{align*}
The empirical distance variance and the distance 
correlation are defined similarly as
\begin{align*}
    \dVar_n(\bXn) &= \dCov_n(\bXn,\bXn),\\
    \dCor_n(\bXn,\bYn) &= 
    \frac{\dCov_n(\bXn,\bYn)}
    {\sqrt{\dVar_n(\bXn) \dVar_n(\bYn)}}\;.
\end{align*}

\subsection{dCovICA}
\label{subsec:dcovica}
\cite{dCovICAmatteson2017} introduced dCovICA, an 
ICA method that uses distance covariance to measure 
dependence between sets of potential sources. The
ICA model assumes that the sources 
$\bS = \bA^{-1} \bX$ are mutually independent at
the population level. The goal is to find an 
unmixing matrix that minimizes the empirical 
dependence between the unmixed components. In order
to establish this mutual independence, 
\cite{jin2018generalizing} showed that it suffices
to ensure $d-1$ pairwise independencies. Their
result states that the variables of $\bS$ are 
mutually independent if and only if
\begin{equation*}
    \bS_1 \indep \bS_{1_{+}}\,,\; 
    \bS_2 \indep \bS_{2_{+}}\,,\; 
    \dots, \bS_{d-1} \indep \bS_d\,,
\end{equation*}
where $k_{+} := \{ l: k < l \leqslant d \}$.

The dCovICA method exploits this property by 
constructing its objective function as the sum
of the empirical dependence dCov between each 
of these $d-1$ pairs of a random variable and
a random vector. More precisely, dCovICA starts 
by whitening the observed data $\bX_n$ using 
\begin{equation} \label{covwhitening}
   \bZn = \bXn \bhSigma^{-1/2}_n 
\end{equation}
where $\bhSigma_n$ is the empirical covariance 
matrix. Then the separating matrix $\bU$ is 
estimated by the optimization 
\begin{equation} \label{eq:dcovICA}
    \bUh := \argmin_{\bU} \sum_{k=1}^{d-1}
		\dCov_n(\bSnk(\bU),\bSnkplus(\bU))\,.
\end{equation} 
Here $\bSnk(\bU)$ denotes the $k$-th column of 
$\bSn(\bU) := \bZn\bU^T$, and $\bSnkplus(\bU)$ 
denotes the submatrix of $\bSn(\bU)$ obtained 
by selecting the columns in 
$k_{+} = \{ l: k < l \leqslant d \}$.

In the population setting we have the underlying
model $\bZ = \bUo^T\bS$, and replacing $\bU$ by
$\bUo$ in the right hand side of 
\ref{eq:dcovICA} would zero each term
$\dCov(\bS_k(\bUo),\bS_{k+}(\bUo))$. Then the
objective would become zero, and therefore attain
its minimal value.

\section{Robust independent component analysis}
\label{sec:methodology}

In this section we introduce our proposal for robust 
independent component analysis. Our approach is 
inspired by dCovICA, but we will introduce robustness 
by using a robust objective function and making some
other modifications.

\subsection{Robust distance correlation}
\label{subsec:robin}

The distance covariance and correlation are often 
credited with inherent robustness properties. Such
properties were formally investigated by 
\cite{leyder2024distance} in the setting of 
measuring dependence between univariate random 
variables $X$ and $Y$, with rather nuanced
conclusions. It turned out that while distance 
correlation exhibits some robustness in the sense
of a bounded influence function, it lacks strong 
robustness due to its breakdown value of 0 and
the fact that its influence function does not
go to zero for far outliers. To mitigate its 
sensitivity to outliers, its robustness was
enhanced by a data transformation. 
The univariate datasets are first
standardized to have a median of 0 and a median 
absolute deviation of 1. The data is then 
transformed using the new 
\textit{biloop transform}, and finally the 
distance correlation is computed on the transformed 
data. The biloop transform is given by
$\psi_{\biloop}: \mathbb{R} \to \mathbb{R}^2: 
x \mapsto (u(x),v(x))$ with
\begin{align*}
u(x) &= 
\begin{cases}
   \;\;\;c\, ( 1 + \cos{(2 \pi \tanh{(x/c)}+ \pi)}) 
   & \text{if } x\geqslant 0\\
   -c\, ( 1 + \cos{(2 \pi \tanh{(x/c)}- \pi)})  
   & \text{if } x < 0
\end{cases}\\
v(x) &= \sin(2 \pi \tanh{(x/c)})
\end{align*}
where $c>0$ is a tuning constant which is set to
$c = 4$ by default. Figure \ref{fig:psibiloop} 
illustrates this transformation. The innovative 
feature of $\psi_{\biloop}$ is its two-dimensional 
image, allowing it to go to zero for far outliers,
called the {\it redescending property}, without 
crossing itself. 
This makes it possible for this transformation 
to be simultaneously one-to-one, continuous, and 
redescending. It needs to be one-to-one to preserve 
the independence property, meaning that on the 
population level, the biloop distance correlation 
is zero if and only if the input variables are 
independent. Due to the redescending nature of the 
biloop transform, the biloop distance 
correlation is substantially more robust than the 
classical distance correlation, or the distance 
correlation applied to ranks.

\begin{figure}[!ht]
\centering
\includegraphics[width = 0.50\textwidth]
  {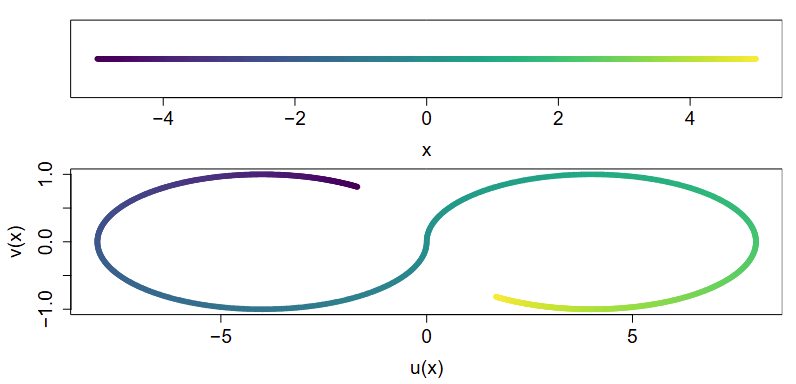}
\caption{Illustration of the biloop transformation.}
\label{fig:psibiloop}
\end{figure}

The biloop transformation was introduced for 
measuring dependence between two univariate random 
variables. In order to obtain a robust dependence
measure between multivariate random variables, one
could apply the biloop transform to each 
coordinate of the vectors separately, and then 
compute the distance correlation between these 
transformed vectors. This is a viable approach,
which transforms a $p$-dimensional vector (with
$p \leqslant d$) such 
that the transformation is one-to-one, continuous, 
and redescending. But the image of the vector 
is now in $2p$-dimensional space, which makes
the computation of subsequent distance 
correlations more computationally intensive. 

This poses an interesting question. Can we 
transform a $p$-dimensional vector such that the 
transformation is bounded, one-to-one, 
continuous, and redescending, but with the 
dimension of its image lower than $2p$\,? The 
answer turns out to be positive,
with a transformation to $(p+1)$-dimensional
space that achieves the three desired properties.
The proposed transformation we call the {\it bowl 
transform}, defined as $\psi: \mathbb{R}^p \to 
\mathbb{R}^{p+1}: \bx \mapsto (v_1(\bx),v_2(\bx))$
in which
\begin{align} \label{eq:bowltransform}
  u(\bx): \mathbb{R}^p \to \mathbb{R}^1: \ 
  &\bx \mapsto \tanh{\left( 
   \frac{||\bx||}{q}\right)} \nonumber \\
  v_1(\bx): \mathbb{R}^p \to \mathbb{R}^p: \ 
  &\bx \mapsto 10 \ u(\bx)^2 \ (1-u(\bx))^2 \ \bx\\
  v_2(\bx): \mathbb{R}^p \to \mathbb{R}^1: \ 
  &\bx \mapsto 10 \ u(\bx)^6 \ 
   (1-u(\bx))^2 \nonumber
\end{align}
where $||\bx||$ is the Euclidean norm of $\bx$
and $q = \sqrt{\chi^2_{0.9975, p}}$ is a 
scaling constant. 
Note that $v_1(\bx)$ is bounded because
$(1 - u(\bx)) \rightarrow 0$ for large $\bx$. 
The transformation is illustrated in 
Figure~\ref{fig:bowl1} for $p = 1$ and in 
Figure~\ref{fig:bowl2} for $p=2$. We see that 
the image of the transformation for $p=2$ 
is a rotated version of that for $p=1$ around 
the $z$-axis. This would not have been possible 
with the biloop transform, because rotating it 
would have broken the continuity.

\begin{figure}[!ht]
\centering
\includegraphics[width=0.50\linewidth]
     {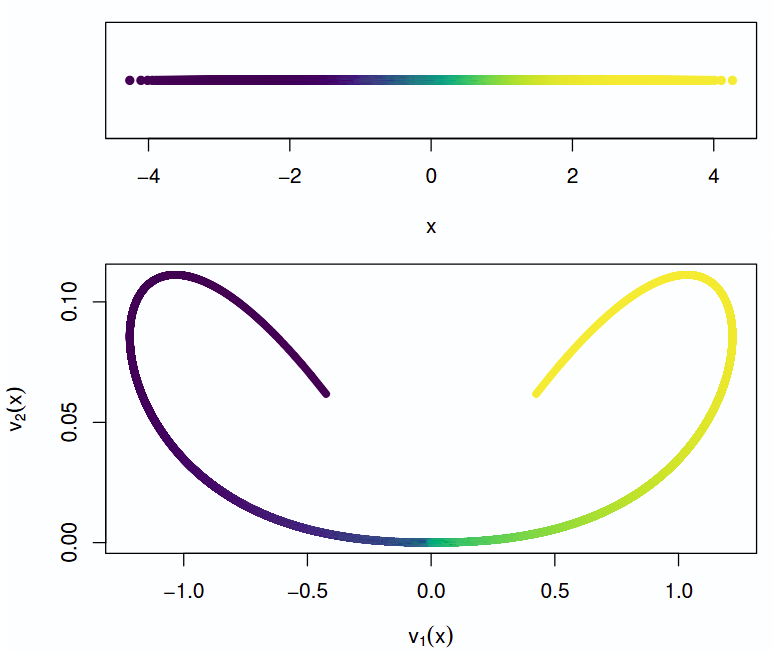}
\caption{Illustration of the bowl transform 
  $\psi(\bx) = (v_1(\bx),v_2(\bx))$ for $p=1$.}
\label{fig:bowl1}
\end{figure}

\begin{figure}[!ht]
\centering
\includegraphics[width=0.55\linewidth]
    {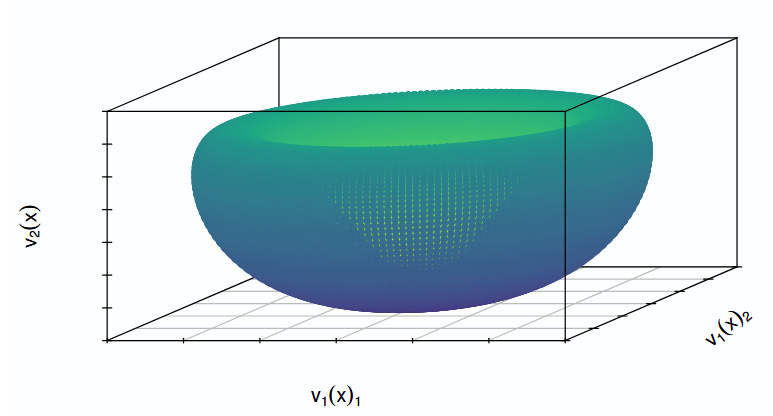}
\caption{Illustration of the bowl transform 
  $\psi(\bx) = (v_1(\bx),v_2(\bx))$ for $p=2$.}
\label{fig:bowl2}
\end{figure}

Computing the distance correlation of random 
vectors $\bX$ and $\bY$ transformed 
by~\eqref{eq:bowltransform} yields 
the more robust dependence measure 
\begin{equation} \label{eq:bowlcor}
\dCor(\psi(\bX),\psi(\bY))\,.
\end{equation}
It still holds that~\eqref{eq:bowlcor} is zero if 
and only if $\bX$ and $\bY$ are independent, and
the dimensions of $\bX$ and $\bY$ only get
increased by 1.

\subsection{RICA}
In order to perform robust independent component 
analysis, we start by whitening the data.  
Classical ICA and dCovICA whiten the data by 
\eqref{covwhitening} which employs the empirical 
covariance matrix $\bhSigma_n$. From a 
robustness perspective this is not ideal, as the 
empirical covariance matrix is highly sensitive
to outliers. Its breakdown value is zero,
meaning that even a single outlier can render the 
estimate useless. Therefore it is essential to 
perform whitening by a robust covariance estimator 
before conducting the next ICA steps, as noted by 
\cite{RADICAL}, \cite{chen2013gammaica}, and 
\cite{nordhausen2018independent}. For this whitening 
step we employ the Minimum Covariance Determinant 
(MCD) estimator of \cite{rousseeuw1984least} 
computed by the algorithm of 
\cite{rousseeuw1999fastMCD}. The MCD has excellent 
robustness properties, due to its bounded influence 
function and its breakdown value of 25\% when using 
the default settings. This process yields the 
robustly whitened data 
\begin{equation} \label{eq:MCDwhitening}
  \bZn = \bXn \bhSigma_{\scriptscriptstyle 
  \text{MCD}}^{-1/2}\;.
\end{equation}

After the data has been whitened, we need to 
estimate an orthogonal separating matrix. The goal
is to find an orthogonal matrix $\bU$ that 
minimizes the empirical dependence between the 
components of $\bSn(\bU) = \bZn \bU^T$. 
For this purpose the dCovICA method employs the
objective function \eqref{eq:dcovICA} based on
the classical dCov measure. We modify this 
objective function in two ways. First, we replace
the distance covariance dCov by the distance
correlation dCor, which is invariant to scale
differences. And secondly, we apply
dCor to the data after subjecting it to the 
bowl transform. We therefore minimize
\begin{equation} \label{eq:RICAobjective}
 \bUh = \argmin_{\bU} \sum_{k=1}^{d-1}\dCor_n
 (\psi(\bSnk(\bU)),\psi(\bSnkplus(\bU)))\;,
\end{equation} 
where again $\bSnk(\bU)$ denotes the $k$-th 
column of $\bSn(\bU)$, and $\bSnkplus(\bU))$ 
denotes the submatrix of $\bSn(\bU)$ consisting
of the columns in $k_{+} = \{ l: k < l \leqslant d \}$.

Combining the estimated separation matrix $\bUh$
from \eqref{eq:RICAobjective} with the whitening 
matrix yields the RICA estimate of the unmixing 
matrix given by
\begin{equation} \label{eq:estRICA}
  \bhA^{\,-1} = \bhSigma_{\scriptscriptstyle
  \text{MCD}}^{-1/2} \bUh \;.
\end{equation}

\subsection{Optimization}
\label{subsec:optimization}

Finding the optimum of the objective function 
in \eqref{eq:RICAobjective} is not an easy task. 
Following \cite{dCovICAmatteson2017} we start by 
parametrizing the estimand $\bU$ as a product of 
$d(d-1)/2$ special matrices. Because $\bU$ is an 
orthogonal matrix, there exists a doubly-indexed 
set of $d(d-1)/2$ angles $\btheta \in \Theta =
\{ \theta_{i,j}: 1 \leqslant i<j\leqslant d : 0 \leqslant 
\theta_{1,j} <2\pi, 0 \leqslant \theta_{i,j} < 
\pi \text{ for } i \neq 1  \}$ such that $\bU$ 
can be written as 
\begin{equation} \label{eq:decomp}
  \bU(\btheta) = \bQ^{d-1} \cdot \bQ^{d-2}
  \cdot \dotso  \cdot \bQ^1\;,
\end{equation}
where
\begin{equation*}
  \bQ^k = \bQ_{k,d}(\theta_{k,d}) \cdot 
  \bQ_{k,d-1}(\theta_{k,d-1}) \cdot \dotso  
  \cdot \bQ_{k,k+1}(\theta_{k,k+1})\,.
\end{equation*}
In the above expressions, 
$\bQ_{i,j}(\theta_{i,j})$ denotes a Givens 
rotation matrix. It is derived from the 
identity matrix $\bI_d$ where the entries 
at $(i,i)$ and $(j,j)$ are replaced by $\cos{(\theta_{i,j})}$, the $(i,j)$ entry by $\sin{(\theta_{i,j})}$, and the 
$(j,i)$ entry by $-\sin{(\theta_{i,j})}$. 
It is worth noting that there exists a unique 
$\btheta$ in $\Theta$ yielding $\bU(\btheta)$
whenever all entries on the diagonal of $\bU$
are positive or all entries of $\bU$ are 
nonzero.

We can thus rewrite the objective function in 
terms of $\btheta$, where we are now 
optimizing over a set of $d(d-1)/2$ angles:
\begin{equation}
\label{eq:RICAobjective_angles}
   \bhtheta = \argmin_{\btheta \in \Theta} 
   \sum_{k=1}^{d-1} \dCor_n
   (\psi(\bSnk(\bU(\btheta))),
   \psi(\bSnkplus(\bU(\btheta))))\,.
\end{equation} 
The decomposition \eqref{eq:decomp} implies
that the $k^{th}$ row of $\bU(\btheta)$ depends 
only on the $k^{th}$ row of a product with 
fewer factors, namely $\bQ^{k} \cdot \bQ^{k-1} 
\cdot \dotso \cdot \bQ^1$. Therefore the 
$k^{th}$ independent component $\bS_k(\btheta)$ 
only depends on the angles $\{ \theta_{i,j}: 
i<j\leqslant d \text{ for } i = 1,2,\dots,k \}$, 
which allows us to estimate the angles 
sequentially. First, estimate the $d - 1$ 
angles $\theta_{1,j}$ yielding the estimate 
$\bS_1(\btheta)$. Next, estimate the $d - 2$ 
angles for recovering $\bS_2(\btheta)$ in the 
subspace orthogonal to the one spanned by 
$\bS_1(\btheta)$. 
We continue this process, each time working in 
the lower-dimensional subspace orthogonal to 
the space spanned by the previously estimated 
components, and estimate the angles of the 
next component. Compared to the joint 
estimation of all the angles 
in \eqref{eq:RICAobjective_angles}, the 
sequential approach is computationally cheaper 
and faster. In principle this could come at 
the cost of a lower accuracy. However, in our 
experiments and using the optimization 
procedure and refinement steps described
below, we found that this was not
the case. Therefore RICA will use the 
sequential approach for all results in this 
paper.

The sequential estimation in RICA gives rise 
to $d-1$ optimization problems, each time 
estimating $d-k$ angles 
$\theta_{k,k+1},\dots,\theta_{k,d}$\,. 
For these optimizations we employ the 
derivative-free solver COBYQA (Constrained 
Optimization BY Quadratic Approximations) of
\cite{COBYQA}. This choice is motivated by 
the challenges of our setting, the 
computation of the bowl transform and the
$\mathcal{O}(n^2)$ computational cost per 
evaluation of the distance correlation,  
which together make it challenging to derive 
and numerically compute exact gradients. As 
a derivative-free, trust-region, sequential 
quadratic programming method, COBYQA is 
designed for such scenarios. It operates by 
iteratively building and refining quadratic 
models of the objective and constraint 
functions using the derivative-free symmetric 
Broyden update, effectively modeling our 
complex landscape. For angle optimization it 
is important that COBYQA always respects 
region constraints, ensuring that all iterated
angles remain within the specified ranges. 
The combination of COBYQA's efficient 
derivative-free optimization and its strict 
adherence to essential bound constraints
makes it a well-suited choice for our 
optimization task.

Finally, we refine our optimization by 
incorporating a technique from the RADICAL 
method of \cite{RADICAL}. They also optimize 
by using $d(d-1)/2$ two-dimensional rotations 
to estimate the orthogonal separating matrix 
$\bU$, a process they refer to as a ``sweep''. 
However, after obtaining a matrix of
sources $\bS_n$ through this process,
they iteratively apply such sweeps multiple 
times until the change in the matrix $\bUh$ 
falls below some threshold. We have opted to 
perform multiple sweeps also. In particular, 
after optimizing the $d(d-1)/2$ angles in 
$\btheta$, we permute the columns of the 
resulting $\bS_n(\btheta)$ and repeat the 
optimization on it, referring to this as 
a sweep. Empirically, we found that 
performing $d+1$ sweeps substantially 
improved the accuracy, in line with the 
findings in the RADICAL paper. On the other 
hand, this also increased the runtime.
The use of sweeps thus involves a trade-off 
between accuracy and computational 
efficiency. Therefore, we will compare RICA 
with the version without sweeps in the 
simulations in Section~\ref{sec:simulations}.

\section{Theoretical analysis}
\label{sec:theoretical analysis}

In this section we establish almost sure 
convergence and root-$n$ consistency of
the RICA estimator. There are two key 
differences between these results and those
previously obtained by 
\cite{dCovICAmatteson2017}. The first is that 
we use the distance correlation rather than 
the distance covariance. The second is that 
we have introduced the bowl transform $\psi$ 
which is applied to the data before measuring 
dependence. As the bowl transform is 
bounded, the first and second moments of the 
transformed data always exist. Therefore, we 
do not require any moment assumptions on the 
original data distribution. Moreover, we will
use the fact that the bowl transform is 
Lipschitz continuous, as it is continuously 
differentiable with bounded derivative.

First we need the population counterpart of  
\eqref{eq:RICAobjective_angles}. We define 
the population parameter $\theta_0$ as
\begin{equation}
\label{eq:RICAobjective_angles_pop}
   \btheta_0 = \argmin_{\btheta \in \Theta}
   \sum_{k=1}^{d-1} \dCor(
   \psi(\bS_k(\bU(\btheta))),
   \psi(\bS_{k_+}(\bU(\btheta))))\,.
\end{equation} 
The following proposition states the almost 
sure convergence of $\bhtheta_n$ to 
$\btheta_0$. We consider a large compact 
subset $\overline{\Theta}$ of the parameter 
space $\Theta$.

\begin{proposition} \label{prop:asconv}
If there exists a unique minimum $\btheta_0$ 
of \eqref{eq:RICAobjective_angles_pop} over
$\overline{\Theta}$, and 
$\btheta_0$ is the only $\btheta$ in $\Theta$
yielding $\bU(\btheta) = \bU(\btheta_0)$,
then we have almost sure convergence for its 
empirical counterpart in 
\eqref{eq:RICAobjective_angles}, that is 
$\bhtheta_n \overset{a.s.}{\to} \btheta_0$.
\end{proposition}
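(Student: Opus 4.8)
The plan is to prove this as a standard $M$-estimation (or argmin-consistency) result, following the classical theorem that if an empirical objective function converges uniformly almost surely to a population objective that has a well-separated unique minimum on a compact set, then the empirical minimizer converges almost surely to the population minimizer. Write $M_n(\btheta) = \sum_{k=1}^{d-1} \dCor_n(\psi(\bS_{n,k}(\bU(\btheta))), \psi(\bS_{n,k_+}(\bU(\btheta))))$ for the empirical objective and $M(\btheta)$ for its population version in \eqref{eq:RICAobjective_angles_pop}. The three ingredients needed are: (i) $M$ is continuous on $\overline{\Theta}$ and, by hypothesis, has a unique minimizer $\btheta_0$ which is therefore well-separated on the compact set $\overline{\Theta}$; (ii) $\sup_{\btheta \in \overline{\Theta}} |M_n(\btheta) - M(\btheta)| \overset{a.s.}{\to} 0$; (iii) the standard argmin argument: for every $\varepsilon>0$, with probability one eventually $M_n(\bhtheta_n) \leqslant M_n(\btheta_0)$ forces $M(\bhtheta_n)$ close to $M(\btheta_0)$ by (ii), and then well-separation forces $\bhtheta_n$ into any neighborhood of $\btheta_0$.

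First I would establish the pointwise almost sure convergence $M_n(\btheta) \overset{a.s.}{\to} M(\btheta)$ for each fixed $\btheta$. For fixed $\btheta$ the map $\bU(\btheta)$ is a fixed orthogonal matrix, so $\bS_{n,k}(\bU(\btheta))$ and $\bS_{n,k_+}(\bU(\btheta))$ are i.i.d.\ samples from the corresponding population laws (note $\bZ_n$ itself involves the estimated whitening matrix $\bhSigma^{-1/2}_{\text{MCD}}$, which converges a.s.\ to its population target, so one should first argue that replacing the sample whitening by the population whitening is asymptotically negligible, or absorb this into the continuity argument below). Since $\psi$ is bounded, the transformed variables have all moments finite, so the $V$-statistic / $U$-statistic representations $T_{1,n}, T_{2,n}, T_{3,n}$ converge a.s.\ to their expectations by the strong law for (degenerate-free) $U$-statistics; hence $\dCov_n \overset{a.s.}{\to}\dCov$ for numerator and both distance variances, and since $\dStd(\psi(\bX)), \dStd(\psi(\bY)) > 0$ (the transformed sources are non-degenerate), the ratio $\dCor_n \overset{a.s.}{\to} \dCor$ by the continuous mapping theorem. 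Summing the $d-1$ terms gives pointwise a.s.\ convergence of $M_n$.

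Next I would upgrade pointwise to uniform convergence over the compact set $\overline{\Theta}$. The key is a stochastic equicontinuity / Lipschitz bound: because $\psi$ is Lipschitz (continuously differentiable with bounded derivative, as stated in the excerpt) and $\btheta \mapsto \bU(\btheta)$ is smooth with $\overline{\Theta}$ compact, one shows that for any two parameter values, $|M_n(\btheta) - M_n(\btheta')| \leqslant L\,\|\btheta - \btheta'\|$ for a constant $L$ that can be taken deterministic once one controls the sample quantities (the distance correlations are built from pairwise Euclidean distances of bounded transformed points, and the denominators are bounded away from zero uniformly a.s.\ for large $n$ since $\dVar_n \overset{a.s.}{\to}\dVar > 0$). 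Combining this uniform (in $n$, eventually a.s.) Lipschitz modulus of continuity with pointwise a.s.\ convergence on a countable dense subset of $\overline{\Theta}$, a routine covering argument yields $\sup_{\overline{\Theta}}|M_n - M| \overset{a.s.}{\to} 0$. The corresponding Lipschitz bound for the population $M$ gives its continuity, so $\btheta_0$ is a well-separated minimum on the compact $\overline{\Theta}$.

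Finally, the argmin step. Fix a neighborhood $G$ of $\btheta_0$; by well-separation $\delta := \inf_{\btheta \in \overline{\Theta}\setminus G} M(\btheta) - M(\btheta_0) > 0$. On the a.s.\ event where $\sup_{\overline{\Theta}}|M_n - M| < \delta/2$ for all large $n$, if $\bhtheta_n \notin G$ we would get $M_n(\bhtheta_n) \geqslant M(\bhtheta_n) - \delta/2 \geqslant M(\btheta_0) + \delta/2 > M(\btheta_0) + \delta/2 - \delta/2 \cdots$; more cleanly, $M_n(\bhtheta_n) \geqslant M(\btheta_0) + \delta/2 > M(\btheta_0) + \delta/2 - (\delta/2) = $ contradiction with $M_n(\bhtheta_n) \leqslant M_n(\btheta_0) \leqslant M(\btheta_0) + \delta/2$. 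Hence eventually $\bhtheta_n \in G$, which is the claim; the extra hypothesis that $\btheta_0$ is the only $\btheta$ in all of $\Theta$ (not just $\overline{\Theta}$) producing $\bU(\btheta_0)$ rules out identifiability leakage outside the compact set. I expect the main obstacle to be the uniform convergence step — specifically, making the deterministic Lipschitz control of $M_n$ rigorous while simultaneously handling the estimated whitening matrix $\bhSigma^{-1/2}_{\text{MCD}}$ inside $\bZ_n$; one wants to show that the substitution error propagates through the bounded, Lipschitz transform $\psi$ and through the distance-correlation functional without spoiling the a.s.\ convergence, which is where the boundedness and Lipschitz properties of the bowl transform do the real work.
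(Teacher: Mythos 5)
Your proposal follows essentially the same route as the paper's proof: pointwise almost sure convergence of the empirical objective via the SLLN for $U$-statistics and the boundedness of $\psi$, a Lipschitz/equicontinuity bound (with the empirical distance variances bounded away from zero) upgraded to uniform a.s.\ convergence over the compact parameter set (the paper invokes Arzel\'a--Ascoli where you use a covering argument, which is the same mechanism), and the standard argmin/well-separation step, with the identifiability hypothesis handling the passage from convergence of $\bU(\bhtheta_n)$ to convergence of $\bhtheta_n$ itself. One small point in your favor: you explicitly flag that $\bZn$ contains the estimated MCD whitening matrix and that this substitution error must be controlled, an issue the paper's proof (following Matteson and Tsay) leaves implicit.
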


Further details on the required lemmas, their 
proofs, and the proof of the proposition are 
provided in the Supplementary Material. Next, 
we present the result on root-$n$ consistency, 
with its proof also available in the 
Supplementary Material.

\begin{proposition} \label{prop:rootncons}
Under the same assumptions as Proposition 
\ref{prop:asconv} we have root-$n$ 
consistency, that is, 
$||\bhtheta_n - \btheta_0|| = O_P(n^{-1/2})$. 
\end{proposition}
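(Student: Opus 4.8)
The plan is to treat \eqref{eq:RICAobjective_angles} as an M-estimation problem and apply the standard asymptotic-normality machinery for Z- or M-estimators, leveraging the regularity already ensured by the bowl transform. Write $M_n(\btheta) = \sum_{k=1}^{d-1}\dCor_n(\psi(\bSnk(\bU(\btheta))),\psi(\bSnkplus(\bU(\btheta))))$ for the empirical objective and $M(\btheta)$ for its population analogue in \eqref{eq:RICAobjective_angles_pop}. First I would record that, since $\btheta_0$ lies in the interior of $\Theta$ (it is assumed to be the unique minimizer over the large compact $\overline{\Theta}$, and the construction places it away from the boundary), both the empirical and population criteria are smooth near $\btheta_0$: each $\bU(\btheta)$ is a smooth (polynomial in $\cos,\sin$) function of the angles through the Givens-rotation decomposition \eqref{eq:decomp}; $\psi$ is continuously differentiable with bounded derivative (Lipschitz), so all the distances $\|\psi(\bs_i)-\psi(\bs_j)\|$ appearing in the $V$-statistics $T_{1,n},T_{2,n},T_{3,n}$ are smooth functions of $\btheta$ on the region where the corresponding $\dVar_n$ terms are bounded away from zero. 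Because the bowl transform is bounded, $\psi(\bX)$ has finite moments of all orders with no assumption on $\bX$, so $\dVar(\psi(\bS_k(\bU(\btheta))))$ and the like are finite and, by the uniqueness/non-degeneracy of $\btheta_0$, strictly positive in a neighbourhood of $\btheta_0$; hence $\dCor_n$ is well-defined and $C^1$ (indeed $C^2$) in $\btheta$ there, uniformly almost surely for large $n$.

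Next I would set up the usual Taylor expansion. By Proposition~\ref{prop:asconv} we already have $\bhtheta_n \overset{a.s.}{\to}\btheta_0$, so eventually $\bhtheta_n$ is interior and $\nabla M_n(\bhtheta_n)=0$. A first-order expansion of the gradient around $\btheta_0$ gives
\begin{equation*}
  0 = \nabla M_n(\btheta_0) + \nabla^2 M_n(\widetilde{\btheta}_n)\,(\bhtheta_n - \btheta_0)
\end{equation*}
for some $\widetilde{\btheta}_n$ on the segment between $\bhtheta_n$ and $\btheta_0$. The argument then has three ingredients: (i) $\nabla M_n(\btheta_0) = O_P(n^{-1/2})$; (ii) $\nabla^2 M_n(\widetilde{\btheta}_n)$ converges in probability to the population Hessian $H := \nabla^2 M(\btheta_0)$; and (iii) $H$ is nonsingular (positive definite). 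For (i), $\dCov_n$ is a $V$-statistic and $\dCor_n$ a smooth function of such $V$-statistics; differentiating in $\btheta$ and evaluating at $\btheta_0$ yields a $V$-statistic whose kernel has mean zero — the mean is $\nabla M(\btheta_0)=0$ by the first-order condition for the population minimizer — so the classical $V$-statistic CLT (applicable because the kernel is bounded, $\psi$ being bounded and Lipschitz and $\bU(\cdot)$ smooth on the compact region) gives the $n^{-1/2}$ rate; equivalently one invokes Hoeffding's decomposition and bounds each degenerate piece. For (ii), a uniform law of large numbers for the (bounded, equicontinuous in $\btheta$) second-derivative kernels over a neighbourhood of $\btheta_0$, combined with $\widetilde{\btheta}_n\to\btheta_0$ a.s., yields $\nabla^2 M_n(\widetilde{\btheta}_n) \overset{P}{\to} H$. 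Then $\bhtheta_n-\btheta_0 = -\big(\nabla^2 M_n(\widetilde{\btheta}_n)\big)^{-1}\nabla M_n(\btheta_0) = O_P(1)\cdot O_P(n^{-1/2}) = O_P(n^{-1/2})$, which is the claim.

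The main obstacle is (iii): showing the population Hessian $H$ at $\btheta_0$ is nonsingular, i.e. that $\btheta_0$ is a \emph{nondegenerate} minimum. The assumed uniqueness of $\btheta_0$ over $\overline{\Theta}$ gives that $H$ is positive semidefinite but not, by itself, strict positive definiteness; a degenerate direction would break the $n^{-1/2}$ rate. I would handle this the way \cite{dCovICAmatteson2017} do for dCovICA: exploit that at $\btheta_0$ the population objective equals zero (each $\dCor(\psi(\bS_k),\psi(\bS_{k_+}))=0$ because the true sources are independent and $\psi$ preserves independence), so $\btheta_0$ is a zero of a sum of nonnegative smooth functions; then argue that the local quadratic behaviour is governed by the derivative of each $\dCor$ term transverse to the independence manifold, and that genuine (non-Gaussian) independent components make this transverse second derivative strictly positive — intuitively, an infinitesimal rotation mixing two non-Gaussian independent coordinates creates dependence at second order. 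Making this rigorous requires a careful computation of $\nabla^2_\btheta \dCor(\psi(\bS_k(\bU(\btheta))),\psi(\bS_{k_+}(\bU(\btheta))))$ at $\btheta_0$ and invoking the non-Gaussianity assumption in the ICA model \eqref{eq:X=AS}; this is the step I expect to be delicate, and it is plausibly where the Supplementary Material does the real work, possibly by importing or adapting the corresponding non-degeneracy lemma from \cite{dCovICAmatteson2017} with $\psi$ inserted and the Lipschitz/boundedness of $\psi$ used to justify differentiating under the expectations. Everything else — smoothness, the $V$-statistic CLT, the uniform LLN for the Hessian — is routine given the boundedness and Lipschitz continuity of the bowl transform emphasized in the text.
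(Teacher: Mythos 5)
Your proposal is correct and follows the same basic strategy as the paper: Taylor-expand the first-order condition around the truth, show the score term is $\Op(n^{-1/2})$ via the V-statistic structure of $\dCov_n$ (with the dCor scale factors and their derivatives being $\Op(1)$ and converging to positive constants), and bound the curvature term away from zero. The one structural difference is in how the expansion is set up. You perform a full multivariate Taylor expansion of $\nabla M_n$ in the angle vector $\btheta$ and must therefore establish positive definiteness of the $d(d-1)/2$-dimensional population Hessian. The paper instead follows \cite{dCovICAmatteson2017} verbatim: it works on $\mathcal{SO}(d)$ modulo signed permutations, connects $\bU(\btheta_0)$ to $\widehat{\bU}$ by a one-dimensional path $\gamma(\tau)$ built from great-circle arcs on the rows (with arc length $\hat{\eta}$), and Taylor-expands the scalar function $\tau \mapsto \calJ_n(\gamma(\tau))$, so that the curvature condition becomes a minimum over path directions of a scalar second derivative. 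Both formulations reduce the crux to exactly the non-degeneracy issue you flagged as the delicate point, and both resolve it the way you predicted: of the terms produced by differentiating each $\dCor_n$ twice, all but one vanish in probability under the ICA model because they carry a factor of $\dCov_n$ or its first derivative (both zero in the limit at the truth), and the surviving term is the second derivative of $\dCov_n$ times a positive scale factor, whose strictly positive lower bound is imported from \cite{dCovICAmatteson2017}. So your instinct that the paper does not re-derive the non-degeneracy but adapts the dCovICA lemma, using the boundedness and Lipschitz continuity of $\psi$ to control the extra factors, is exactly what happens; your version would additionally need a sentence translating the Hessian bound in $\bU$-space (or path-space) back to the angles $\btheta$, which the paper's identifiability assumption on $\btheta_0$ makes routine.
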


We conclude that if the ICA model 
$\bS = \bU(\btheta_0) \bZ$ holds, the 
estimator in \ref{eq:RICAobjective} is 
strongly consistent and converges at the 
usual parametric rate of $1/\sqrt{n}$.

\section{Simulation results}
\label{sec:simulations}

In this section we evaluate RICA by an extensive 
simulation study. We compare its performance 
against various 
state-of-the-art ICA techniques. We will examine 
different contamination scenarios, sample from a 
range of distributions, and consider different 
dimensions. 

\subsection{Setting}
\label{sec:simsetting}

Simulations evaluating the performance of ICA 
methods typically rely on a standardized set of 
benchmark distributions, as in 
\cite{bach2002kernel}, \cite{RADICAL}, and
\cite{dCovICAmatteson2017}. Following this 
convention, we base our analysis on the 18 
distributions outlined in 
\cite{ElemOfStatLearning}. The corresponding 
distributions, depicted in 
Figure~\ref{fig:18distr}, include Student-$t$ 
distributions, symmetric and asymmetric 
Gaussian mixtures, and exponential mixtures.

\begin{figure}[!ht]
\centering
\includegraphics[width=0.6\linewidth]
   {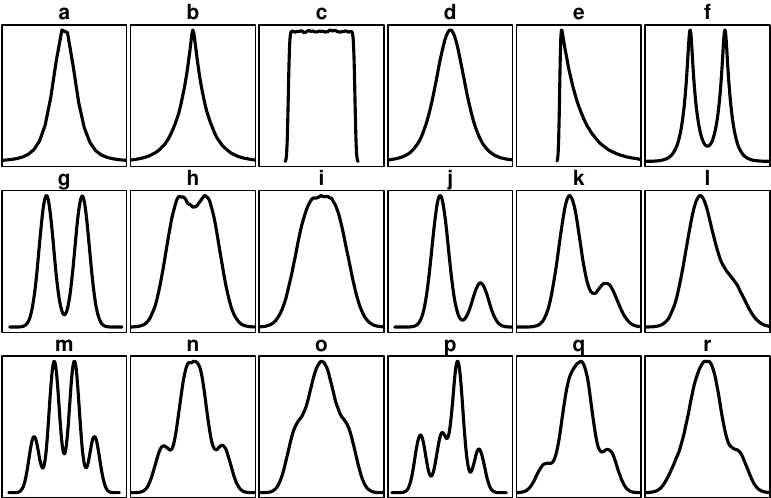}
\caption{The 18 source distributions used 
    in the simulations.}
\label{fig:18distr}
\end{figure}

We consider 3 main settings for sampling the 
data, depending on the contamination added. 
The first setting comprises uncontaminated data. 
For this, the columns of the $n \times d$ matrix 
of data sources $\bSn$ are sampled from the 18 
distributions. 
For the contaminated settings we consider two 
scenarios from \cite{brys2005robustification}, 
called multiplicative and clustered 
contamination. 

For clustered contamination we begin 
with an uncontaminated $\bSn$ as before, 
but then replace 10\% of the rows by 
observations sampled from $\mathcal{N}(15,1)$. 
For multiplicative contamination, we
instead replace 10\% of the rows as follows.
For each selected row, we generate a vector of
size $d$ with values sampled from the uniform 
distribution on $[0,1]$, ensuring that its sum 
is greater than 1. Then this vector is 
multiplied by $d$ and by either the minimum or 
maximum value of the source in each column, 
with the choice between the two made randomly, 
before replacing the row with it. In this way 
outliers are positioned on either side of the 
bulk of the data points. 

After the independent sources $\bSn$ are 
generated, we generate random 
$d \times d$ mixing matrices $A$ 
that are well-conditioned in the sense that 
their condition number is between 1 and 2. 
We then compute the mixed data 
$\bXn = \bSn \bA^T$ that is the
input of the ICA methods. 

We sample data in this manner for three 
different dimensions $d$ as follows:
\begin{itemize}
\item For $d=2$ we generate $n=1000$ 
  observations per dataset and perform 1000 
  replications;
\item For $d=4$ we generate $n=1000$ 
  observations per dataset and perform 100 
  replications;
\item For $d=6$ we generate $n=2000$ 
  observations per dataset and perform 100 
  replications.
\end{itemize}
We repeat this process for each of the 18 
distributions and for each of the three 
contamination settings.

We compare RICA to the FastICA, JADE, 
infomax, and dCovICA methods, that start 
with their classical whitening step. We 
also include the robust RADICAL method, 
that we run with its built-in robust 
whitening step.

For FastICA we used its implementation in 
the Python library \textit{scikit-learn} 
\citep{scikit-learn}, with the exponential 
function as its non-linearity because this 
has the best robustness properties 
\citep{hyvarinen1999robust}. Infomax and 
JADE were obtained from the R package 
\textit{ica} \citep{icapackageR}, while 
RADICAL was sourced from the R package 
\textit{rmgarch} \citep{rmgarchpackageR}. 
For dCovICA and RICA we wrote our own 
Python implementation, using \textit{scipy} 
\citep{SciPy} for the optimization. For
the MCD whitening we used 
\textit{robpy} \citep{leyder2024robpy}.

To examine the performance of the different 
ICA methods we require an evaluation metric. 
\mbox{Direct} comparison of the estimated 
mixing matrices $\bA$ is not possible due to 
inherent indeterminacies in scale, sign, 
and permutation, as discussed in 
Section~\ref{subsec:whitening}. To bypass 
this, we use the Amari error as a measure
of inaccuracy \citep{amarierror1995}:
\begin{equation*}
  \frac{\sum_{i=1}^d \sum_{j=1}^d \left( 
  \frac{|\bP_{ij}|}{\max_k|\bP_{ik}|}-1 
  \right) + \sum_{j=1}^d \sum_{i=1}^d 
  \left( \frac{|\bP_{ij}|}
  {\max_k|\bP_{kj}|}-1 \right)}{2d(d-1)}.
\end{equation*}
Here $\bP = \bU \bW \bA$ is the product of 
the estimated unmixing matrix $\bU \bW$ 
obtained from the ICA algorithm, and the 
true mixing matrix $\bA$. The Amari measure 
is invariant to ICA indeterminacies and 
ranges from 0 to 1, where 0 indicates the 
best possible performance and 1 the worst.

\subsection{Results}
\label{sec:simulationresults}

The simulation results are organized by the 
type of contamination: first those for 
uncontaminated data are shown, then with 
clustered contamination, and finally with 
multiplicative contamination.

The Amari error is averaged over the 
replications per setting, and the average Amari 
error of a method is displayed in function of 
the distribution type on the horizontal axis, 
as in Figure~\ref{fig:sim_uncontaminated}.
This is common practice in the ICA literature
\citep{ElemOfStatLearning}. 
This visualization makes it easier to compare 
ICA methods than by looking at large tables. 
But for completeness, we also provide the 
numerical results in tabular form in 
the Supplementary Material.

\begin{figure}[!ht]
\centering
\begin{subfigure}{0.45\textwidth}
  \centering
  \caption{$d=2$}
  \includegraphics[width=\textwidth]
	{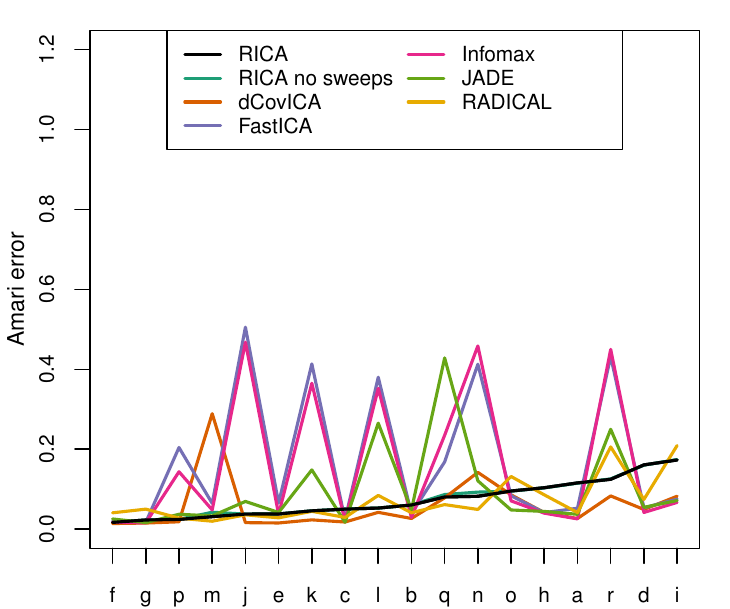}
\end{subfigure}
\begin{subfigure}{0.45\textwidth}
  \centering
  \caption{$d=4$}
  \includegraphics[width=\textwidth]
	{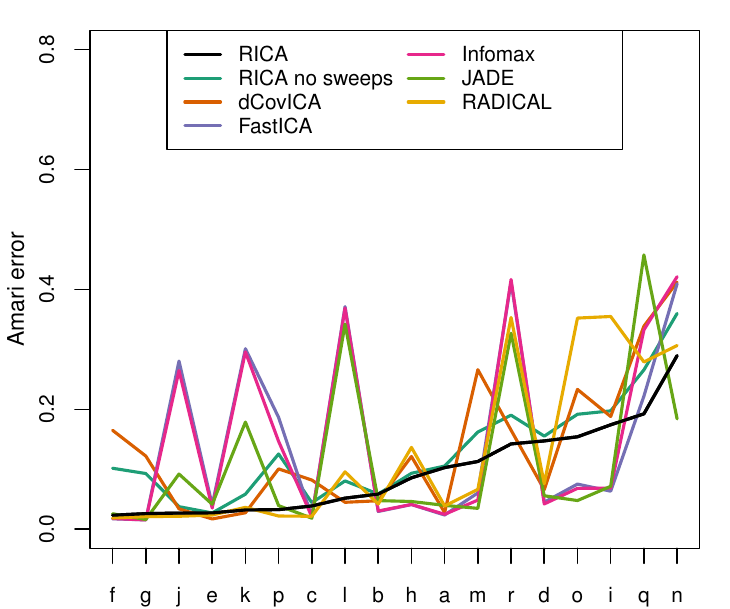}
\end{subfigure}
\begin{subfigure}{0.45\textwidth}
  \centering
  \caption{$d=6$}
  \includegraphics[width=\textwidth]
	{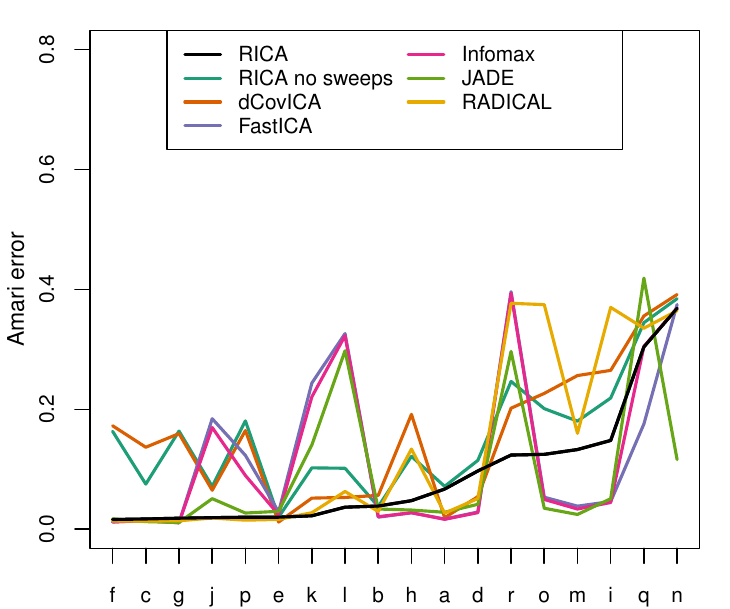}
\end{subfigure}
\begin{subfigure}{0.45\textwidth}
  \centering
  \captionsetup{justification=centering}
  \caption{Amari error ($\times 100$) 
  averaged over\\ all 18 distributions.}
  \hspace{.5cm}
  \raisebox{3.5cm}{
  \scalebox{0.85}{\begin{tabular}{|c|ccc|}
    \hline
    mean & $d=2$ & $d=4$ & $d=6$\\ 
    \hline
    RICA & 7.32 & 
       \textbf{9.54} & \textbf{9.01}\\ 
    RICA no sweeps & 7.47 & 13.04 & 15.55 \\ 
    dCovICA & \textbf{5.93} & 13.65 & 15.74 \\ 
    FastICA & 16.96 & 14.53 & 11.78 \\ 
    Infomax & 16.03 & 14.77 & 11.95 \\ 
    JADE & 9.77 & 11.47 & 9.24 \\ 
    RADICAL & 7.01 & 12.58 & 13.35 \\ 
    \hline
    \end{tabular}}}
\end{subfigure}
\caption{Simulation results for 
         uncontaminated data.}
\label{fig:sim_uncontaminated}
\end{figure}

To enhance the readability and to highlight 
the performance of RICA we have ordered the 
distributions on the horizontal axis of each 
plot according to the results obtained by 
the default version of RICA, which is the 
one with optimization sweeps.

\subsubsection{Uncontaminated data}

Figure \ref{fig:sim_uncontaminated} shows the
simulation results on uncontaminated data, with
one plot per dimension. It also includes a small 
table listing the average Amari error over all 18 
distributions, with the lowest error per column
in boldface.

From the results we conclude that even when 
there is no contamination, RICA performed 
very well. While other methods attain
both high and low errors, reflecting
sensitivity to the underlying distribution 
type, RICA exhibits a more stable performance 
with less variation. In dimensions 4 
and 6 it was the best-performing method 
based on the average error over all 
distributions.

\subsubsection{Clustered contamination}

When 10\% of clustered contamination 
was introduced into the data, 
Figure~\ref{fig:sim_clustcontaminated} 
indicates that most ICA methods lost 
their competitiveness, while RICA 
remained the only method maintaining low 
Amari errors. It is worth noting that a 
random non-singular matrix will typically 
yield an Amari error around 0.4 
\citep{nordhausen2011performanceindices}, 
suggesting that for such clustered 
contamination data, most ICA methods 
performed no better than making a random 
guess.

We also observe that the beneficial 
effect of performing optimization sweeps 
on the accuracy remains as strong as it 
was for uncontaminated data, particularly 
as the dimension increased. This is why 
our default version of RICA is with the
sweeps.

\begin{figure}[!ht]
    \centering
    \begin{subfigure}{0.45\textwidth}
        \centering
        \caption{$d=2$}
        \includegraphics[width=\textwidth]
				{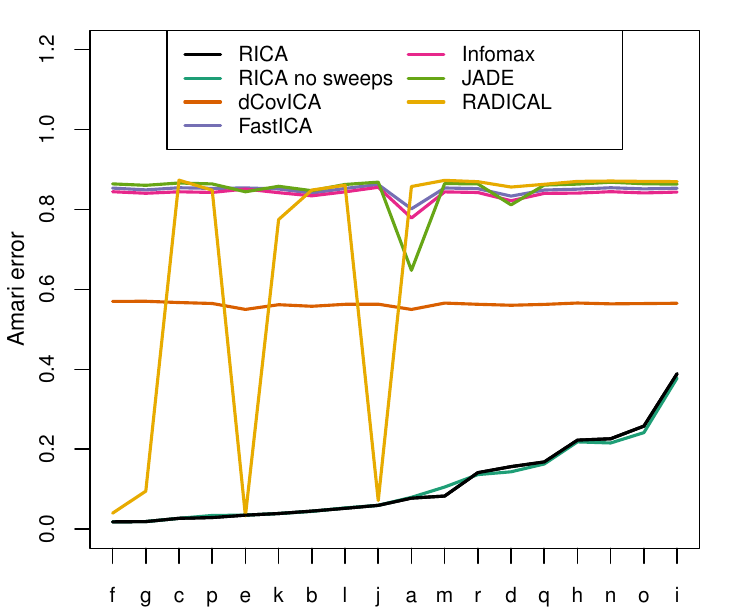}
            \end{subfigure}
    \begin{subfigure}{0.45\textwidth}
        \centering
        \caption{$d=4$}
        \includegraphics[width=\textwidth]
				{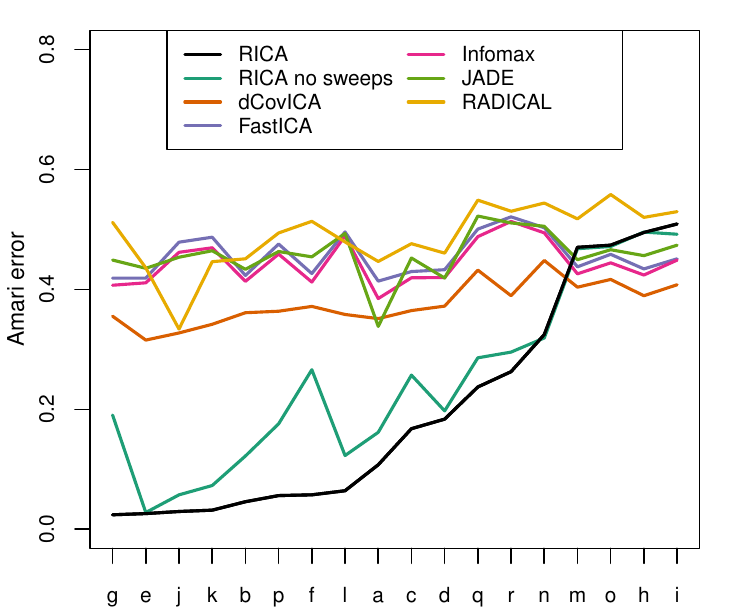}
    \end{subfigure}
    \begin{subfigure}{0.45\textwidth}
        \centering
        \caption{$d=6$}
        \includegraphics[width=\textwidth]
				{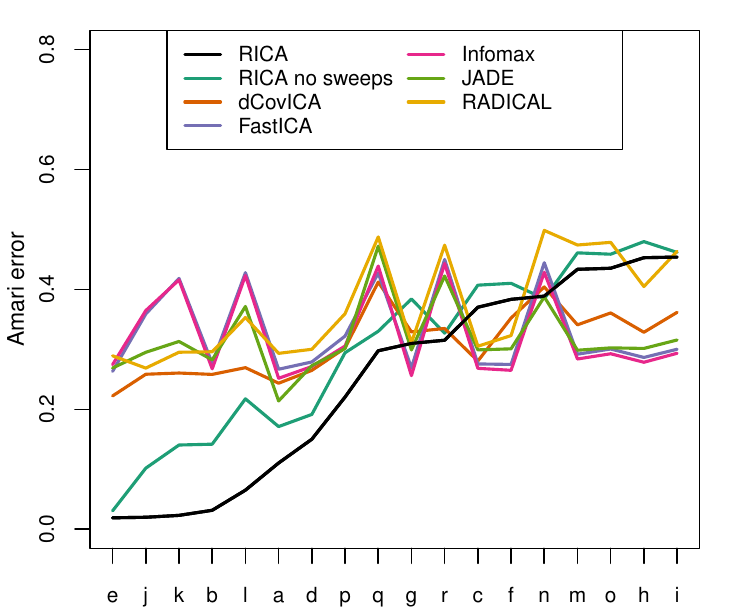}
    \end{subfigure}
    \begin{subfigure}{0.45\textwidth}
        \centering
        \captionsetup{justification=centering}
				\caption{Amari error ($\times 100$) 
        averaged over\\ all 18 distributions.}
        \hspace{.5cm}
        \raisebox{3.5cm}{
        \scalebox{0.85}{\begin{tabular}{|c|ccc|}
          \hline
 & $d=2$ & $d=4$ & $d=6$ \\ 
  \hline
  RICA & 11.37 &
       \textbf{19.81} & \textbf{24.89} \\ 
  RICA no sweeps & \textbf{11.18} & 
       24.88 & 29.96 \\ 
  dCovICA & 56.27 & 37.61 & 31.03 \\ 
  FastICA & 84.85 & 45.60 & 32.97 \\ 
  Infomax & 83.88 & 44.37 & 32.35 \\ 
  JADE & 84.70 & 45.79 & 31.77 \\ 
  RADICAL & 68.04 & 48.89 & 37.09 \\ 
   \hline
        \end{tabular}}}
    \end{subfigure}
    \caption{Simulation results for
       clustered contamination.}
    \label{fig:sim_clustcontaminated}
\end{figure}

\subsubsection{Multiplicative contamination}

Lastly, the results for multiplicative 
contamination are shown in 
Figure~\ref{fig:sim_multcontaminated}. 
While RICA is again the most effective 
method in the presence of outliers, 
RADICAL does outperform the remaining
methods. This is not surprising, as 
insensitivity to outliers was one of
its design goals. However, its 
effectiveness depends on the 
distribution of the sources, as it
worked very well in some settings 
and did worse than a random guess 
in others. We also see that dCovICA 
performs well at certain distributions. 
Indeed, in Section \ref{subsec:robin} 
we saw that the standard dCov has some 
robustness properties, but perhaps 
not enough.

\begin{figure}[!ht]
\centering
    \begin{subfigure}{0.45\textwidth}
        \centering
        \caption{$d=2$}
        \includegraphics[width=\textwidth]
				{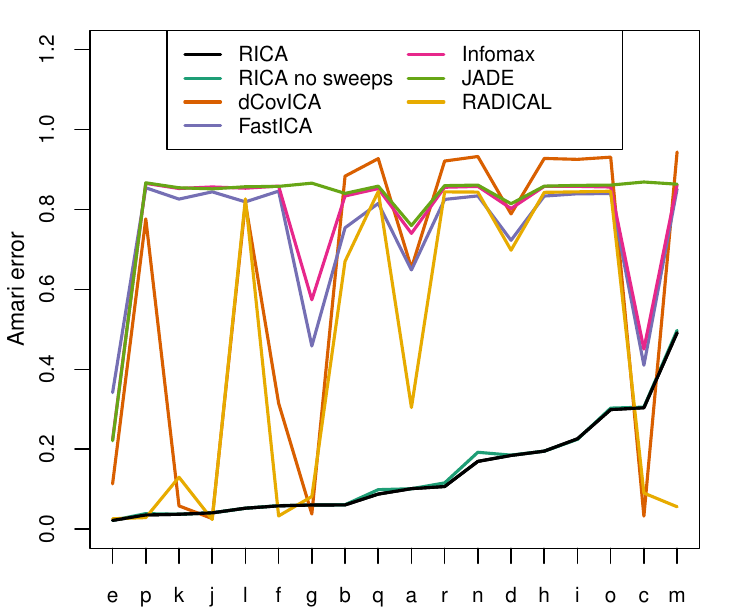}
            \end{subfigure}
    \begin{subfigure}{0.45\textwidth}
        \centering
        \caption{$d=4$}
        \includegraphics[width=\textwidth]
				{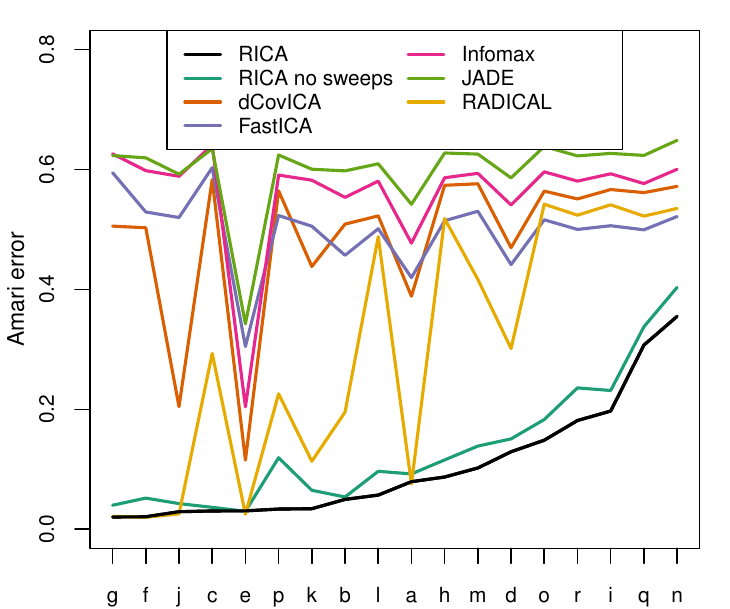}
    \end{subfigure}
    \begin{subfigure}{0.45\textwidth}
        \centering
        \caption{$d=6$}
        \includegraphics[width=\textwidth]
				{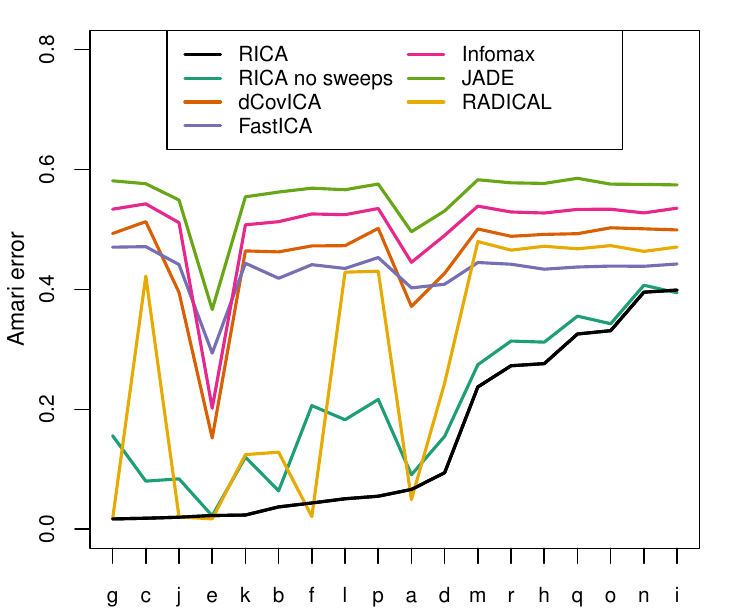}
    \end{subfigure}
    \begin{subfigure}{0.45\textwidth}
        \centering
        \captionsetup{justification=centering}
        \caption{Amari error ($\times 100$) 
        averaged over\\ all 18 distributions.}
        \hspace{.5cm}
        \raisebox{3.5cm}{
        \scalebox{0.85}{\begin{tabular}{|c|ccc|}
          \hline
 & $d=2$ & $d=4$ & $d=6$ \\ 
  \hline
  RICA & \textbf{14.08} & 
      \textbf{10.51} & \textbf{14.92} \\ 
  RICA no sweeps & 14.39 & 13.46 & 20.98 \\ 
  dCovICA & 61.18 & 48.72 & 45.58 \\ 
  FastICA & 74.25 & 49.94 & 43.10 \\ 
  Infomax & 77.29 & 56.17 & 50.32 \\ 
  JADE & 81.56 & 59.93 & 55.43 \\ 
  RADICAL & 44.64 & 29.91 & 28.86 \\ 
   \hline
        \end{tabular}}}
    \end{subfigure}
    \caption{Simulation results for 
      multiplicative contamination.}
    \label{fig:sim_multcontaminated}
\end{figure}

\subsection{Increasing contamination}

Following \cite{bach2002kernel} and 
\cite{RADICAL}, we conducted an experiment 
to investigate the effect of gradually 
introducing more and more outliers. 
Starting from outlier-free datasets, we 
progressively increase the proportion of 
outliers up to 20\%. 

We start with bivariate datasets of 
1000 observations, where each source was
sampled uniformly at random from the 18 
previously defined distributions. We then
introduced outliers step by step,
each time drawing one more data point 
and replacing a randomy selected 
coordinate of that point by either +5 
or -5, with equal probability. 
This is repeated until the number of 
outliers reaches 200, so the 
contamination level ranges from 0\% to 
20\%. The experiment was replicated
1000 times to ensure reliable results. 

\begin{figure}[!ht]
\centering
\includegraphics[width=0.49\textwidth]
  {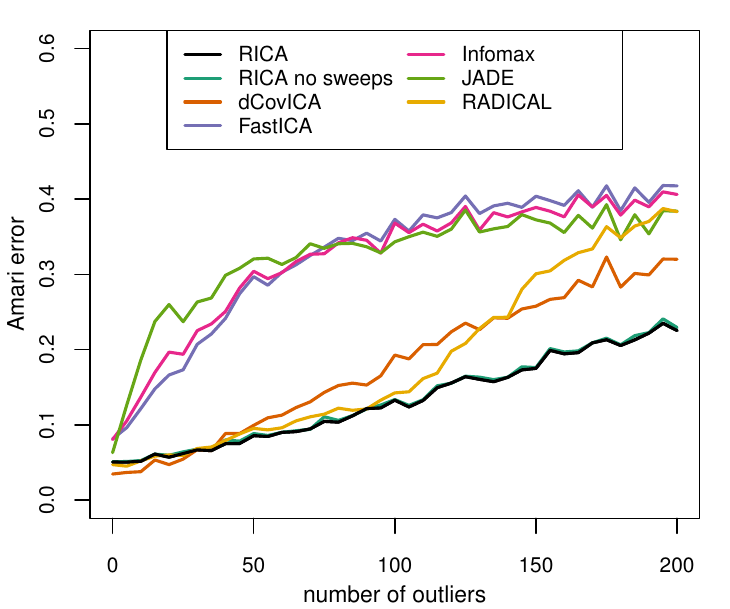}
\caption{Effect of increasing number of 
         outliers for $d=2$.}
\label{fig:increasingcontamination2D}
\end{figure}

Figure~\ref{fig:increasingcontamination2D}
plots the average Amari error in function
of the number of outliers. For all methods
the error goes up with the amount of
contamination, with RICA being the least
affected. RADICAL also achieves good results
initially, but for more than 10\% of outliers
its performance deteriorates. The dCovICA 
method performs well up to 5\% of outliers, 
after which it becomes less effective. 
FastICA, infomax and JADE are affected 
already for a small number of outliers.

\section{Real data examples}
\label{sec:real data examples}

In this section, we illustrate RICA on real 
datasets containing outliers. For this we use 
image data, audio data, and periodic signals.

\subsection{Image data}

As a first example, we use three images from 
the \textit{scikit-image} package in Python 
\citep{scikit-image}: an astronaut, grass, and 
a cat. The grayscale images are $128 \times 
128$. Storing the pixels in a tall column
vector, this yields a $16384 \times 3$ dataset. 
We then introduce outliers in each image by 
randomly selecting 250 pixels and turning them
into black pixels, which amounts to 1.5\% of
contamination. The images are then mixed using 
a random $3 \times 3$ orthogonal matrix $\bA$, 
and we attempt to reconstruct the original ones.

In Figure \ref{Fig:Imagesexample} we see that
RICA is the only method that successfully 
recovers the original images. In contrast, 
the other three methods fail to separate 
the sources correctly, as the cat appears in 
two of the reconstructed images.

\begin{figure}[!ht]
\centering
    \begin{subfigure}{0.45\textwidth}
        \centering
        \caption{original images}
        \includegraphics[width=\textwidth]
				{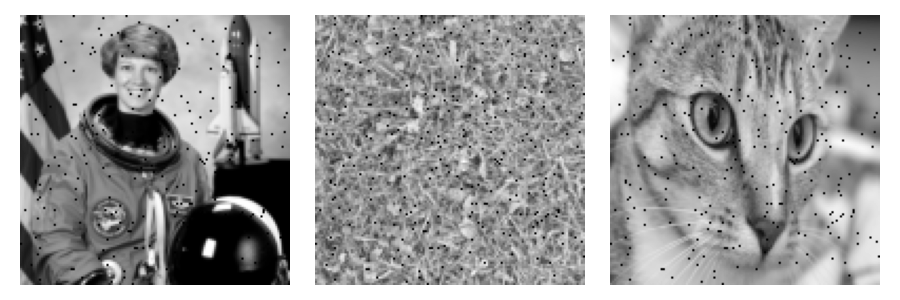}
    \end{subfigure}
    \begin{subfigure}{0.45\textwidth}
        \centering
        \caption{mixed images}
        \includegraphics[width=\textwidth]
				{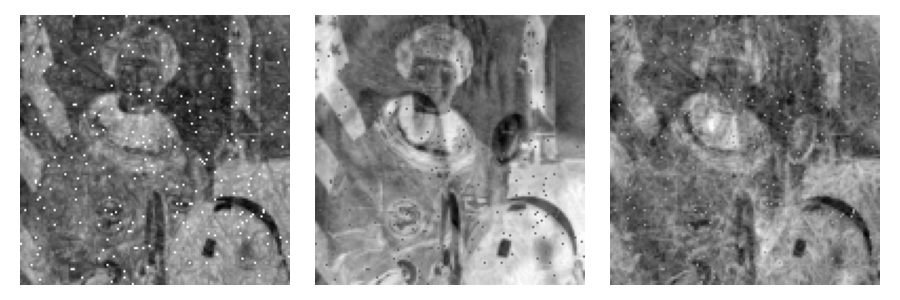}
    \end{subfigure}
    \begin{subfigure}{0.45\textwidth}
        \centering
        \caption{RICA}
        \includegraphics[width=\textwidth]
				{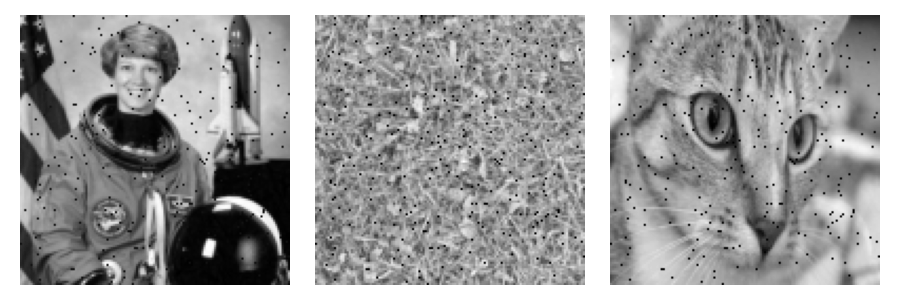}
    \end{subfigure}
    \begin{subfigure}{0.45\textwidth}
        \centering
        \caption{dCovICA}
        \includegraphics[width=\textwidth]
				{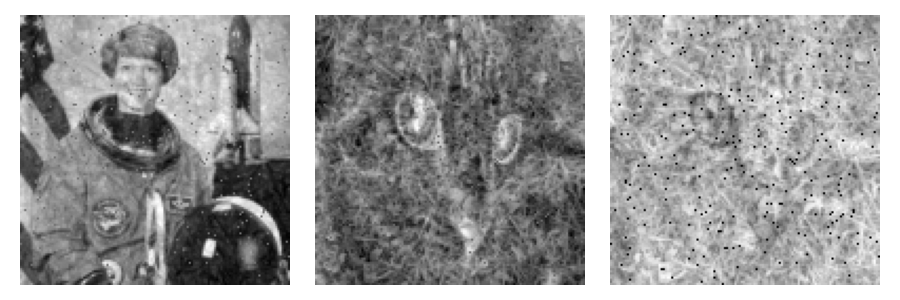}
    \end{subfigure}
    \begin{subfigure}{0.45\textwidth}
        \centering
        \caption{FastICA}
        \includegraphics[width=\textwidth]
				{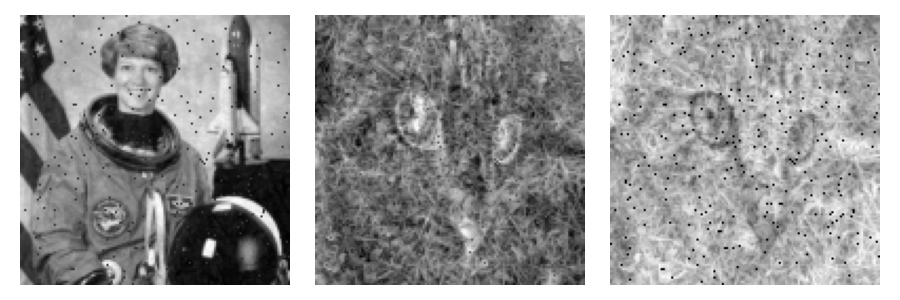}
    \end{subfigure}
    \begin{subfigure}{0.45\textwidth}
        \centering
        \caption{RADICAL}
        \includegraphics[width=\textwidth]
				{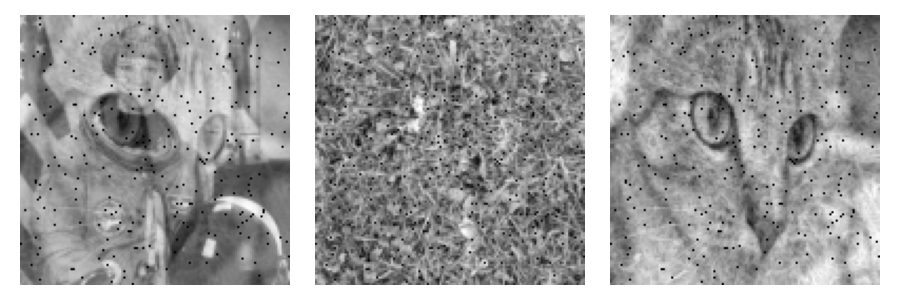}
    \end{subfigure}
\caption{Mixed images example.}
\label{Fig:Imagesexample}
\end{figure}

\subsection{Cocktail party problem}

Another application of ICA is to the 
well-known cocktail party problem, a scenario 
in which multiple overlapping signals, such as 
speech from different speakers, are recorded 
through multiple microphones. The objective is 
to separate these mixed signals without prior 
knowledge about the original sources.

\begin{figure}[!ht]
    \centering
    \begin{subfigure}{0.45\textwidth}
        \centering
        \caption{original signals}
        \includegraphics[width=\textwidth]
				{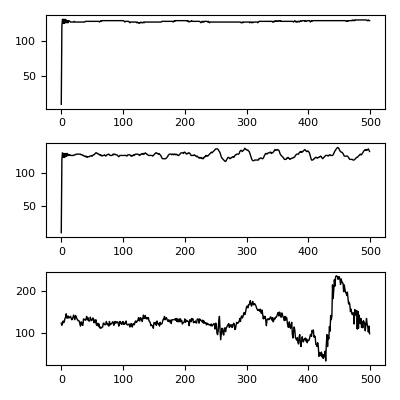}
            \end{subfigure}
    \begin{subfigure}{0.45\textwidth}
        \centering
        \caption{RICA}
        \includegraphics[width=\textwidth]
				{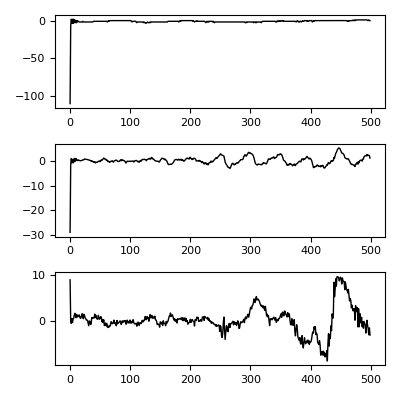}
    \end{subfigure}
    \begin{subfigure}{0.45\textwidth}
        \centering
        \caption{dCovICA}
        \includegraphics[width=\textwidth]
				{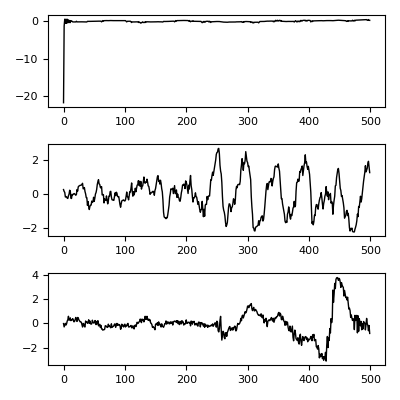}
            \end{subfigure}
    \begin{subfigure}{0.45\textwidth}
        \centering
        \caption{fastICA}
        \includegraphics[width=\textwidth]
				{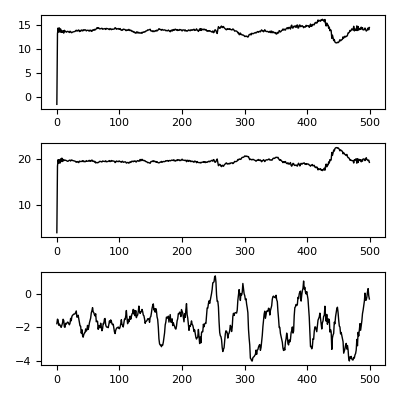}
    \end{subfigure}
    \caption{Cocktail party example on 
		    \textit{JADE} data.}
    \label{Fig:CPPexampleJADE}
\end{figure}

We first illustrate this problem on the 
cocktail party data from the \textit{JADE} 
package \citep{JADEpackage} in R. 
We select the first 500 cases, where an 
artifact stands out in two of the three sources 
at the start of the recording, likely due to 
turning on the microphones or the recorder.
The artifact is clearly visible in the top two
time series in the northwest panel of
Figure~\ref{Fig:CPPexampleJADE}. The three
sources are mixed by a random $3 \times 3$ 
orthogonal matrix.

We then apply RICA, dCovICA, and FastICA. The
results are shown in the three remaining panels
of~Figure \ref{Fig:CPPexampleJADE}. We see that
only RICA successfully separates the signals, 
while both dCovICA and FastICA are significantly 
affected by the outlier. The corresponding Amari 
errors further support this: RICA 
achieves the lowest error (0.0875), followed by 
dCovICA (0.2326) and FastICA (0.4895).

After removing the outlying observation from the 
data set, dCovICA and FastICA perform better, 
with Amari errors of 0.0905 and 0.2714. The error 
of RICA stays about the same at 0.0992, so it 
was minimally affected by the outlier.

A second cocktail party dataset containing outliers 
is from \cite{brys2005robustification}. This dataset 
consists of 5000 observations of two signals, with a
loud external noise occurring in the middle. This 
noise affects both sources, introducing dependence 
between the signals.

\begin{figure}[!ht]
\centering
    \begin{subfigure}{0.45\textwidth}
        \centering
        \caption{original signals}
        \includegraphics[width=\textwidth]
        {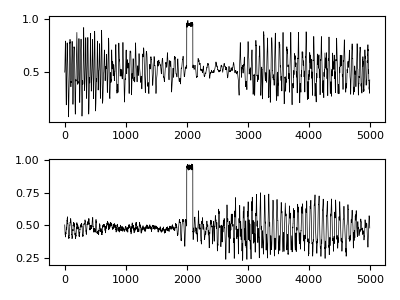}
            \end{subfigure}
    \begin{subfigure}{0.45\textwidth}
        \centering
        \caption{RICA}
        \includegraphics[width=\textwidth]
        {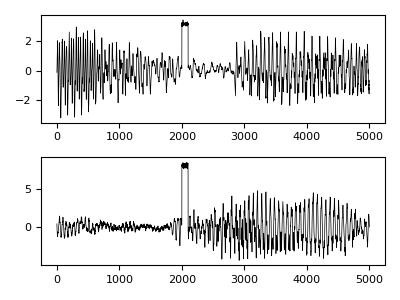}
    \end{subfigure}
    \begin{subfigure}{0.45\textwidth}
        \centering
        \caption{dCovICA}
        \includegraphics[width=\textwidth]
        {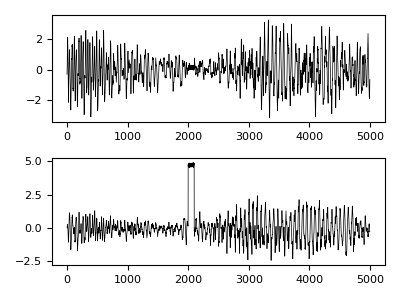}
            \end{subfigure}
    \begin{subfigure}{0.45\textwidth}
        \centering
        \caption{FastICA}
        \includegraphics[width=\textwidth]
        {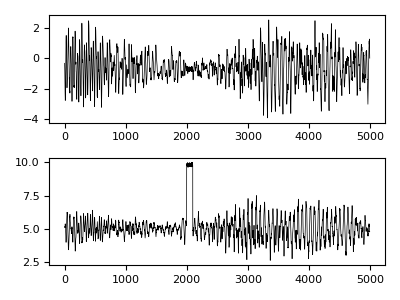}
    \end{subfigure}
\caption{Cocktail party example on the
         Brys et al data.}
\label{Fig:CPPexampleBrys}
\end{figure}

Figure~\ref{Fig:CPPexampleBrys} shows the sources
before mixing, and the unmixing results of RICA,
dCovICA and FastICA. Again RICA is the only method
recovering the original signals, while dCovICA and 
FastICA are affected by the external noise and  
attribute the noise to a single source. RICA  
attains a low Amari error (0.017), while dCovICA 
and FastICA obtain high errors (0.549 and 0.590).

\subsection{Periodic data}

Lastly we replicate the periodic signals from 
\cite{brys2005robustification}. This setup 
involves generating three periodic sources:
\begin{equation*}
    \sin{(3t)}, \ \text{sawtooth}(19t), \ 
    \log (\text{rem}(t,\pi)),
\end{equation*}
for $t=-10\pi,-10\pi+0.1,\dots, 10\pi$. Here 
\texttt{sawtooth} repeats a linear function,
and \texttt{rem} is the remainder function.
We then contaminate the sawtooth signal by
1.5\% of outliers, represented as vertical 
lines in Figure~\ref{Fig:Periodicexample}. 
After mixing the data we apply RICA, dCovICA, 
and FastICA, yielding the remaining panels of
Figure~\ref{Fig:Periodicexample}. Both dCovICA 
and FastICA struggle to recover the third signal, 
which becomes quite noisy and gets an upward
spike near the end. RICA does reproduce the 
signals and attains the lowest Amari error 
(0.0111), followed by dCovICA (0.0742) and 
FastICA (0.1001).

\begin{figure}[!ht]
    \centering
    \begin{subfigure}{0.45\textwidth}
        \centering
        \caption{original signals}
        \includegraphics[width=\textwidth]
				{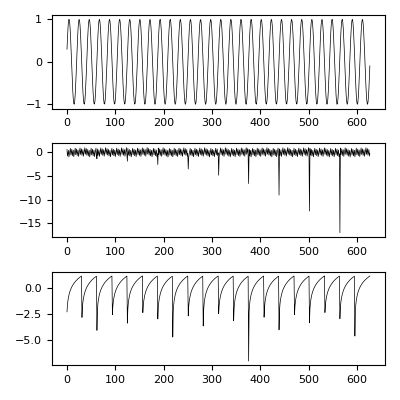}
    \end{subfigure}
    \begin{subfigure}{0.45\textwidth}
        \centering
        \caption{RICA}
        \includegraphics[width=\textwidth]
				{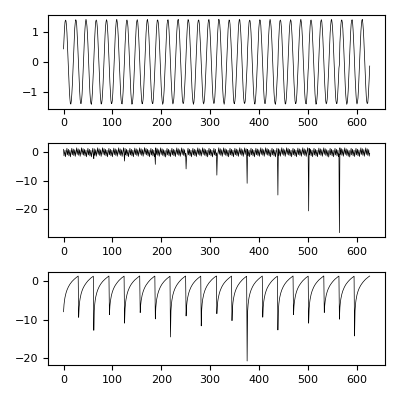}
    \end{subfigure}
    \begin{subfigure}{0.45\textwidth}
        \centering
        \caption{dCovICA}
        \includegraphics[width=\textwidth]
				{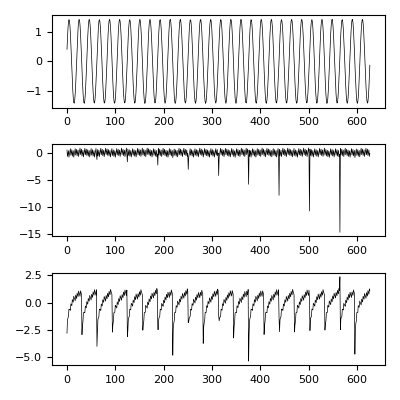}
        \end{subfigure}
    \begin{subfigure}{0.45\textwidth}
        \centering
        \caption{FastICA}
        \includegraphics[width=\textwidth]
				{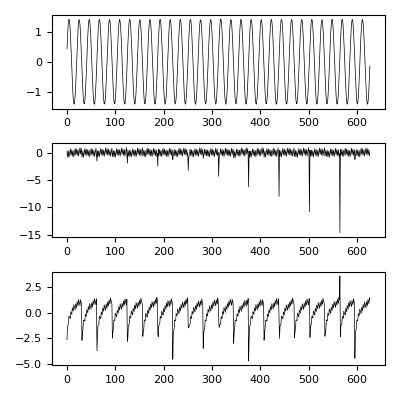}
    \end{subfigure}
    \caption{Periodic data example.}
    \label{Fig:Periodicexample}
\end{figure}

\section{Discussion}
\label{sec:discussion}

Independent component analysis (ICA) is a powerful 
tool in signal processing. Unfortunately, most 
popular approaches for ICA are not robust against 
outliers, for instance because they use 
higher-order statistics. 
In the literature several methods have
been proposed that aim to be more robust, based
on various principles. Some approaches involved
a robust choice of contrast function, and 
others used preprocessing, nonparametrics, or 
divergences. In this paper we proposed a robust 
ICA method called RICA, which proceeds by 
estimating the separating matrix by minimizing
a robust measure of dependence between vector
variables.

The dependence measure used is the distance
correlation (dCor) of \cite{szekely2007dcor}, 
whereas other authors have used the
distance covariance. In order to make the 
distance correlation more robust we constructed
a new transformation called the bowl transform,
which maps $p$-dimensional space to a manifold 
in $p+1$ dimensions. This transform is bounded,
one-to-one, and continuous. It is also
redescending, meaning that far outliers are
mapped to points close to the origin.
By computing the dCor after transforming its 
inputs it becomes more robust to outliers, 
while preserving the crucial property that 
a dCor of zero characterizes independence.
The RICA method estimates the independent
sources sequentially, by looking for the
component which has the smallest dCor with
the remainder projected on its orthogonal
complement.

We showed that RICA is strongly consistent and 
has the usual parametric rate of convergence.
Its robustness was investigated empirically by
a simulation study, in which it outperformed
its competitors most of the time and on average.
RICA was illustrated in three applications,
including the well-known cocktail party problem.

It is worth noting that combining robustness 
with measuring independence is a balancing
act. Robust methods aim to reduce the effect of 
extreme observations. However, if the true 
dependence mainly comes from the tails of the 
distributions of the sources, a robust method
may miss it.
With this in mind, it is recommended to compare 
the results of a robust method with those of a 
classical method. If the outcomes are similar, 
all is well. If the outcomes are different, one 
has to study the results in order to decide
whether the extreme points in question are part 
of the structure we want to model, or due to 
unwanted data contamination.

Among potential directions for future research 
could be extensions of RICA to blind source 
separation, or to estimation in the noisy ICA 
model described in e.g. \citep{voss2015}.\\

\noindent \textbf{Acknowledgment.} We are
grateful to Art Owen for asking the question that 
motivated us to construct the bowl transform.



\clearpage
\pagenumbering{arabic}
\setcounter{page}{1}
\appendix
\begin{center}
\phantom{abc}\\ 

\Large{\bf Supplementary Material}\\
\end{center}
\vspace{15mm}

\setcounter{equation}{0} 
\renewcommand{\theequation}
  {A.\arabic{equation}} 

\spacingset{1.45} 

\section{Proofs of the propositions in 
Section~\ref{sec:theoretical analysis}}

\subsection{Proof of Proposition \ref{prop:asconv}}
To prove this proposition, we first generalize Lemmas 
A.1-A.3 of \cite{dCovICAmatteson2017} to the lemmas
below.
 
\begin{lemma} \label{lemma1}
$\forall \btheta \in \Theta:$
\begin{equation*}
   \sum_{k=1}^{d-1} 
   \dCor_n(\psi(\bSnk(\btheta)),
   \psi(\bSnkplus(\btheta))) 
   \;\; \overset{a.s.}{\rightarrow} \;\; 
   \sum_{k=1}^{d-1} 
   \dCor(\psi(\bS_{ k}(\btheta)),
   \psi(\bS_{ k_{+}}(\btheta))) \ 
   \text{ for } \ n \to \infty.
\end{equation*} 
\end{lemma}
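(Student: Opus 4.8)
The plan is to reduce this (pointwise in $\btheta$) convergence to the strong law of large numbers (SLLN) for U-statistics, the essential point being that the bowl transform $\psi$ is bounded, so that no moment conditions on $\bZ$ are needed. Fix $\btheta \in \Theta$. Since a finite intersection of probability-one events still has probability one, it suffices to prove $\dCor_n(\psi(\bSnk(\btheta)),\psi(\bSnkplus(\btheta))) \overset{a.s.}{\rightarrow} \dCor(\psi(\bS_k(\btheta)),\psi(\bS_{k_+}(\btheta)))$ for each $k=1,\dots,d-1$ and then add the $d-1$ limits.

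First I would observe that, because the rows of $\bZn$ are i.i.d.\ and $\bU(\btheta)$ is a fixed orthogonal matrix, the rows of $\bSn(\btheta) = \bZn\bU(\btheta)^T$ are i.i.d.\ copies of the random vector $\bS(\btheta) = \bU(\btheta)\bZ$; hence $(\psi(\bSnk(\btheta)),\psi(\bSnkplus(\btheta)))$ is an i.i.d.\ sample of size $n$ from the joint law of $(\bX,\bY) := (\psi(\bS_k(\btheta)),\psi(\bS_{k_+}(\btheta)))$. Evaluated on this transformed sample, $T_{1,n}$ and $T_{3,n}$ are U-statistics and $T_{2,n}$ is a product of two U-statistics, with kernels formed from $\|\bx-\bx'\|\,\|\by-\by'\|$, $\|\bx-\bx'\|$, $\|\by-\by'\|$ and the six mixed products appearing in $T_{3,n}$. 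Because $\psi$ is bounded, $\bX$ and $\bY$ are bounded random vectors, so all of these kernels are bounded and hence integrable, and the SLLN for U-statistics (together with the elementary fact that a product of almost surely convergent sequences converges almost surely to the product of the limits, which takes care of $T_{2,n}$) shows that $T_{1,n}$, $T_{2,n}$, $T_{3,n}$ converge almost surely to the corresponding population expectations, so that $T_{1,n}+T_{2,n}-T_{3,n}$ converges a.s.\ to $\dCov(\bX,\bY)$. The same argument with $\bY$ replaced by $\bX$ (resp.\ $\bX$ by $\bY$) gives $\dVar_n(\psi(\bSnk(\btheta))) \overset{a.s.}{\rightarrow} \dVar(\psi(\bS_k(\btheta)))$ and the analogue for the $k_+$ block. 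This part is the analogue of Lemma~A.1 of \cite{dCovICAmatteson2017}; the only new ingredient is that boundedness of $\psi$ removes the moment assumptions on the data.

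To pass from distance covariances to the distance correlation I would use the continuous mapping theorem for the map $(a,b,c) \mapsto a/\sqrt{bc}$, which is continuous on $\{b>0,\ c>0\}$. The only thing left is therefore to verify that the limiting denominator does not vanish, i.e.\ $\dVar(\psi(\bS_k(\btheta))) > 0$ and $\dVar(\psi(\bS_{k_+}(\btheta))) > 0$. For this I would invoke that a distance variance is zero only when its argument is almost surely constant, that each component of $\bS(\btheta) = \bU(\btheta)\bZ$ has strictly positive variance (since $\bZ$ has a nonsingular covariance matrix and $\bU(\btheta)$ is orthogonal, hence nonsingular) and is therefore non-degenerate, and that $\psi$ is one-to-one, so that $\psi$ applied to a non-degenerate vector remains non-degenerate. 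Hence both distance variances in the denominator are strictly positive, $\dCor_n(\psi(\bSnk(\btheta)),\psi(\bSnkplus(\btheta))) \overset{a.s.}{\rightarrow} \dCor(\psi(\bS_k(\btheta)),\psi(\bS_{k_+}(\btheta)))$, and summing over $k$ completes the argument.

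I expect the SLLN bookkeeping — handling the product structure of $T_{2,n}$ and the degree-three statistic $T_{3,n}$ — to be routine and essentially inherited from \cite{dCovICAmatteson2017}. The one genuinely new point, which arises because our objective uses $\dCor$ rather than $\dCov$, is ruling out a vanishing denominator, so the non-degeneracy argument combining the whitening step with the injectivity of $\psi$ is the step I would be most careful to get right. (Note also that this lemma only asserts pointwise convergence in $\btheta$; the uniform control needed for Proposition~\ref{prop:asconv} is presumably handled in the subsequent lemmas.)
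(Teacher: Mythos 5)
Your argument is correct and follows essentially the same route as the paper: the paper's proof simply bounds the difference of the sums termwise and then cites Corollary~1 of \cite{szekely2007dcor} for the almost sure convergence of $\dCor_n$ to $\dCor$, with the boundedness of $\psi$ guaranteeing the finite moments that corollary requires — exactly the ingredients you re-derive by hand via the SLLN for U-statistics and the continuous mapping theorem. Your explicit check that the limiting distance variances are strictly positive (whitening plus injectivity of $\psi$) is a point the paper defers to its Lemma~\ref{lemma2}, so it is a welcome, not a divergent, addition.
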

\begin{proof}
Note that
\begin{align*}
   &\left|\sum_{k=1}^{d-1} 
   \dCor_n(\psi(\bSnk(\btheta)),
   \psi(\bSnkplus(\btheta))) - 
   \sum_{k=1}^{d-1} \dCor(\psi(\bS_{k}(\btheta)),
   \psi(\bS_{ k_{+}}(\btheta)))  \right|\\
   &\leqslant \sum_{k=1}^{d-1} 
   \left|\dCor_n(\psi(\bSnk(\btheta)),
   \psi(\bSnkplus(\btheta))) - \dCor(\psi(\bS_{k}
   (\btheta)),\psi(\bS_{k_{+}}(\btheta))) \right|\,.
\end{align*}
Here we know that $\psi(\bS_{ k}(\btheta))$
and $\psi(\bS_{ k_{+}}(\btheta))$ have finite 
moments $\forall k \in \{1,\dots,d-1\}$ as 
$\psi$ is bounded. Hence we can use 
Corollary 1 of \cite{szekely2007dcor} which states 
that $\dCor_n(\bXn,\bYn)$ converges almost surely 
to $\dCor(\bX,\bY)$, establishing the lemma.
\end{proof}

Next, we consider the group of all $d \times d$ 
orthogonal matrices with \mbox{determinant} 1, 
which is denoted by $\mathcal{SO}(d)$. On this 
group, take a pseudometric $\mathcal{D}$ such 
that $\forall \bU, \bA \in \mathcal{SO}(d): 
\mathcal{D}(\bU,\bA) \geq 0$, and 
$\mathcal{D}(\bU,\bA)
= 0 \iff \exists \boldsymbol{P}$, a signed 
permutation matrix, such that 
$\bU = \boldsymbol{P} \bA$. This yields 
equivalence classes on $\mathcal{SO}(d)$ where 
two elements are in the same equivalence class 
if and only if their distance is zero. 
This results is the quotient space 
$\mathcal{SO}(d)_{\mathcal{D}} = 
\mathcal{SO}(d)/\mathcal{D}$ of these 
equivalence classes with $\bU = \bA$ in 
$\mathcal{SO}(d)_{\mathcal{D}} \iff 
\mathcal{D}(\bU,\bA) = 0$.\\

\begin{lemma} \label{lemma2}
The function $\calJ_n(\btheta) = 
\sum_{k=1}^{d-1} \dCor_n(\psi(\bSnk(\btheta)),
\psi(\bSnkplus(\btheta)))$ is Lipschitz 
continuous in $\btheta: \bU(\btheta) \in 
\mathcal{SO}(d)_{\mathcal{D}}$.
\end{lemma}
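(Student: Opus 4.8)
The plan is to reduce the statement to the individual summands — a finite sum of Lipschitz functions is Lipschitz — and to realise each map
\[
\btheta\ \longmapsto\ \dCor_n\bigl(\psi(\bSnk(\btheta)),\psi(\bSnkplus(\btheta))\bigr)
\]
as a composition of Lipschitz maps, namely $\btheta\mapsto\bU(\btheta)\mapsto\bSn(\btheta)=\bZn\bU(\btheta)^T\mapsto(\psi(\bSnk(\btheta)),\psi(\bSnkplus(\btheta)))\mapsto\dCor_n(\cdot,\cdot)$, each Lipschitz on the domain that matters, with a constant allowed to depend on the fixed matrix $\bZn$ and on $n$.

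First I would dispatch the three straightforward arrows. Each Givens factor $\bQ_{i,j}(\theta_{i,j})$ has entries $0$, $1$ or $\pm\cos\theta_{i,j},\pm\sin\theta_{i,j}$, hence is $1$-Lipschitz in $\theta_{i,j}$ and has operator norm $1$; expanding $\bU(\btheta)$ as the product \eqref{eq:decomp} of finitely many such factors and telescoping gives $\|\bU(\btheta)-\bU(\btheta')\|\le C\|\btheta-\btheta'\|$. The map $\bU\mapsto\bZn\bU^T$ is linear with fixed $\bZn$, hence Lipschitz, and so is the passage to the column $\bSnk$ and the submatrix $\bSnkplus$. For the third arrow I would use that the bowl transform $\psi$ is Lipschitz (being continuously differentiable with bounded derivative), so applying it row by row preserves Lipschitzness. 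For $\dCov_n$ the key observation — with no analogue in \cite{dCovICAmatteson2017}, where $\dCov$ is used unnormalised and one instead restricts to a compact parameter set to keep the sources bounded — is that $\psi$ has bounded image, so the transformed data automatically lie in one fixed bounded region regardless of $\btheta$. On a bounded region each pairwise distance $\|\bx_i-\bx_j\|$ is bounded and $1$-Lipschitz in the data by the reverse triangle inequality, and finite sums and products of bounded Lipschitz functions are Lipschitz, so $T_{1,n}$, $T_{2,n}$, $T_{3,n}$ and hence $\dCov_n$ are Lipschitz there; in particular $\dCov_n(\psi(\bSnk(\btheta)),\psi(\bSnkplus(\btheta)))$, $\dVar_n(\psi(\bSnk(\btheta)))$ and $\dVar_n(\psi(\bSnkplus(\btheta)))$ stay bounded above.

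The genuinely new step, and the main obstacle, is a \emph{positive lower bound} on the two distance variances in the denominator of $\dCor_n$, uniform over $\btheta$. Here I would argue that, since $\psi$ is one-to-one, $\dVar_n(\psi(\bSnk(\btheta)))$ can vanish only for a very degenerate transformed sample; in particular it would vanish if the column $\bSnk(\btheta)$ of $\bZn\bU(\btheta)^T$ were a constant vector, which would put $\mathbf 1_n$ in the column span of $\bZn$ (equivalently of $\bXn$), forcing all $\bx_i$ onto a common hyperplane $\{x:a^Tx=1\}$; under the continuity of $\bX$ and $n>d$ the data points a.s.\ affinely span $\mathbb R^d$, so this is a null event — and the same for $\bSnkplus(\btheta)$. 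After excluding, almost surely and for all $\btheta$ simultaneously, this (together with the remaining measure-zero family of configurations on which the empirical distance variance degenerates), the maps $\btheta\mapsto\dVar_n(\psi(\bSnk(\btheta)))$ and $\btheta\mapsto\dVar_n(\psi(\bSnkplus(\btheta)))$ are continuous and strictly positive on the compact set $\{\btheta:\bU(\btheta)\in\mathcal{SO}(d)_{\mathcal D}\}$, hence attain positive minima; let $\varepsilon>0$ be the smaller of the two. (Incidentally this shows the convention $\dCor_n:=0$ on a vanishing denominator is never triggered there.)

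To conclude, the function $(a,b,c)\mapsto a/\sqrt{bc}$ has bounded gradient, hence is Lipschitz, on the convex box $\{|a|\le M,\ \varepsilon\le b,c\le M\}$, and by the previous two paragraphs $\btheta\mapsto\bigl(\dCov_n(\psi(\bSnk),\psi(\bSnkplus)),\,\dVar_n(\psi(\bSnk)),\,\dVar_n(\psi(\bSnkplus))\bigr)$ is a Lipschitz map into this box. Composing, $\btheta\mapsto\dCor_n(\psi(\bSnk(\btheta)),\psi(\bSnkplus(\btheta)))$ is Lipschitz for each $k$; summing over $k=1,\dots,d-1$ gives the lemma.
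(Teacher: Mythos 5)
Your proposal is correct and follows essentially the same route as the paper: both arguments reduce to showing each $\dCor_n$ summand is Lipschitz by bounding the numerator (using that $\psi$ is Lipschitz with bounded image, so $\dCov_n$ and $\dVar_n$ are Lipschitz and bounded) and, crucially, lower-bounding the denominator by a fixed $\varepsilon>0$ via compactness of the parameter set together with the fact that the empirical distance variance vanishes only when all transformed observations coincide, an almost-sure null event since $\psi$ is injective. The only cosmetic differences are that you prove the Lipschitzness of $\dCov_n$ directly from the boundedness of $\psi$'s image rather than citing the corresponding lemma of Matteson and Tsay, and you conclude via Lipschitzness of $(a,b,c)\mapsto a/\sqrt{bc}$ on a box instead of bounding the derivative with the quotient rule.
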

\begin{proof}
First note that the bowl transform is Lipschitz continuous as it is continuously differentiable with bounded derivatives. Additionally, compositions of Lipschitz continuous functions remain Lipschitz continuous. Therefore, the only relevant difference with the proof of the similar Lemma A.2 in \cite{dCovICAmatteson2017} is our use of $\dCor_n$ instead of $\dCov_n$. So for our Lemma to hold, we need to show that $\dCor_n$ is also Lipschitz continuous in our setting.

For this we check whether it has a bounded first derivative. We can use the quotient rule for this:
{\footnotesize
\begin{align} \label{eq:dCornlipschitz}
  \frac{d}{d\btheta}(\dCor_n&(\psi(\bSnk(\btheta)),\psi(\bSnkplus(\btheta)))) = \frac{d}{d\btheta}\left(\frac{\dCov_n(\psi(\bSnk(\btheta),\psi(\bSnkplus(\btheta))))}{\sqrt{\dVar_n(\psi(\bSnk(\btheta)))\dVar_n(\psi(\bSnkplus(\btheta)))}}\right)\\=  &\frac{\frac{d}{d\btheta}(\dCov_n(\psi(\bSnk(\btheta)),\psi(\bSnkplus(\btheta))))}{\dStd_n(\psi(\bSnk(\btheta)))\dStd_n(\psi(\bSnkplus(\btheta)))} \nonumber \\
  &- \left(\frac{\frac{d}{d\btheta}(\dVar_n(\psi(\bSnk(\btheta))))}{2 \,\dVar_n(\psi(\bSnk(\btheta)))} + \frac{\frac{d}{d\btheta}(\dVar_n(\psi(\bSnkplus(\btheta))))}{2 \,\dVar_n(\psi(\bSnkplus(\btheta)))}\right)\; \dCor_n(\psi(\bSnk(\btheta)),\psi(\bSnkplus(\btheta)))\,. \nonumber
\end{align}}
\noindent
\begin{itemize}
\item The numerators are all bounded as $\dCov_n$ and $\dVar_n$ are Lipschitz continuous in $\btheta$ \citep{dCovICAmatteson2017}, $\psi$ is Lipschitz, and $\dCor_n$ is bounded by 1.
\item For the denominators we need a lower bound to upper bound the whole expression in \eqref{eq:dCornlipschitz}. For this we use the extreme value theorem that states that if a function is continuous on a closed and bounded interval that it attains its minimum. We apply this to
\begin{align*}
   m_1 &= \min_{\ell \subseteq \{ 1,2,\dots,d\}} \min_{\btheta \in [0,2\pi]} \dStd_n(\psi(\bSnl(\btheta)))\\
   m_2 &= \min_{\ell \subseteq \{ 1,2,\dots,d\}} \min_{\btheta \in [0,2\pi]} \dVar_n(\psi(\bSnl(\btheta)))
\end{align*}
so $m_1$ and $m_2$ exist. Additionally we have that $m_1,m_2\geq0$ are zero if and only if every sample observation in $\psi(\bSnl(\btheta))$ is identical (Theorem 12.2 in \cite{szekely2023energy}), an event that happens with probability zero in the ICA setting (note that $\psi$ is a bijection). Hence we obtain fixed $m_1, m_2 > 0$ and we can use these to lower bound the denominators of \eqref{eq:dCornlipschitz}. Therefore Equation \eqref{eq:dCornlipschitz} is upper bounded, and hence $\dCor_n$ is Lipschitz continuous in $\btheta$ as it has a bounded first derivative.
\end{itemize}
This finishes the proof of Lemma \ref{lemma2}.
\end{proof}

\begin{lemma} \label{lemma3}
For $\calJ(\btheta) = \sum_{k=1}^{d-1} \dCor(\psi(\bS_{ k}(\btheta)),\psi(\bS_{ k_{+}}(\btheta)))$ it holds that
\begin{equation*}
   \sup_{\btheta:\bU(\btheta) \in  \mathcal{SO}(d)_{\mathcal{D}}} |\calJ_n(\btheta) -\calJ(\btheta)| \overset{a.s.}{\to} 0 \quad \text{ as } \quad n \to \infty\,.
\end{equation*}
\end{lemma}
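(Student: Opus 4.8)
The plan is to deduce Lemma~\ref{lemma3} from Lemmas~\ref{lemma1} and~\ref{lemma2} by a uniform law of large numbers, i.e.\ a compactness-plus-equicontinuity argument over the quotient space $\mathcal{SO}(d)_{\mathcal{D}}$. This space is compact, being the continuous image of the compact group $\mathcal{SO}(d)$, so for every $\epsilon>0$ it admits a finite cover by balls of radius $\epsilon$. The skeleton is then standard: if $\calJ_n$ and $\calJ$ are Lipschitz in $\btheta$ with a common constant $L$ (not depending on $n$, for all $n$ large enough, almost surely), cover $\mathcal{SO}(d)_{\mathcal{D}}$ by balls of radius $\epsilon/(3L)$ centered at $\btheta_1,\dots,\btheta_N$; by Lemma~\ref{lemma1} the maximum $\max_{i\leqslant N}|\calJ_n(\btheta_i)-\calJ(\btheta_i)|$ tends to $0$ almost surely because it runs over a fixed finite set; and for an arbitrary $\btheta$ in the $i$-th ball,
\begin{equation*}
  |\calJ_n(\btheta)-\calJ(\btheta)|\leqslant |\calJ_n(\btheta)-\calJ_n(\btheta_i)| + |\calJ_n(\btheta_i)-\calJ(\btheta_i)| + |\calJ(\btheta_i)-\calJ(\btheta)| \leqslant \tfrac{2\epsilon}{3} + o(1),
\end{equation*}
which is below $\epsilon$ eventually. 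Letting $\epsilon$ run through a sequence tending to $0$ gives the claimed uniform almost sure convergence.

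The one nontrivial ingredient is the uniform-in-$n$ Lipschitz constant. The constant produced inside the proof of Lemma~\ref{lemma2} is random: it depends on the lower bounds $m_1$ and $m_2$ for $\dStd_n(\psi(\bSnl(\btheta)))$ and $\dVar_n(\psi(\bSnl(\btheta)))$, which a priori could drift toward $0$ with $n$. To rule this out I would first prove an auxiliary uniform law of large numbers for the building blocks: $\dCov_n(\psi(\bSnk(\btheta)),\psi(\bSnkplus(\btheta)))$ and $\dVar_n(\psi(\bSnl(\btheta)))$ converge, almost surely and uniformly over $\mathcal{SO}(d)_{\mathcal{D}}$, to their population analogues. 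These converge pointwise almost surely by Corollary~1 of \citep{szekely2007dcor} (exactly as in Lemma~\ref{lemma1}), and they are Lipschitz in $\btheta$ with a \emph{deterministic} constant: since $\psi$ is bounded, the transformed and rotated samples lie in a fixed bounded set, so the Lipschitz bounds for $\dCov_n$ and $\dVar_n$ established in \citep{dCovICAmatteson2017} become data-free (no division enters at this level). Hence the same finite-net argument as above yields this auxiliary uniform convergence.

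From there, I would check that the population minima $\min_{\ell}\min_{\btheta}\dStd(\psi(\bS_{\ell}(\btheta)))$ and $\min_{\ell}\min_{\btheta}\dVar(\psi(\bS_{\ell}(\btheta)))$ are strictly positive. Indeed $\psi$ is a bijection and each $\bS_{\ell}(\btheta)$ is non-degenerate in the ICA model, so $\psi(\bS_{\ell}(\btheta))$ is non-degenerate; by the characterization of a vanishing distance variance (Theorem~12.2 of \citep{szekely2023energy}) this forces $\dVar(\psi(\bS_{\ell}(\btheta)))>0$, and continuity of $\btheta\mapsto\dVar(\psi(\bS_{\ell}(\btheta)))$ together with compactness shows the minimum is attained and therefore positive. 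Combining this with the auxiliary uniform convergence, $m_1$ and $m_2$ are bounded below, uniformly in $n$ and for all $n$ large, by half the corresponding population minima, almost surely. Plugging these uniform lower bounds (and the deterministic numerator bounds already noted in the proof of Lemma~\ref{lemma2}) into \eqref{eq:dCornlipschitz} produces a common Lipschitz constant $L$ for all $\calJ_n$ with $n$ large, and, passing to the almost sure pointwise limit, for $\calJ$ as well — precisely what the skeleton in the first paragraph requires.

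The main obstacle is exactly this control of the denominators: guaranteeing that the empirical distance standard deviations of the transformed sources do not collapse toward zero as $n$ grows, uniformly over all rotations $\btheta$. Once that is handled via the auxiliary ULLN for $\dVar_n$ (itself straightforward, because the transformed data is bounded, which makes the Matteson--Tsay Lipschitz bound data-free) and the positivity of the population minimum through the degeneracy characterization of distance variance, the remainder is routine $\epsilon$/finite-net bookkeeping.
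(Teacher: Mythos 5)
Your proposal is correct and follows essentially the same route as the paper: both arguments combine the pointwise a.s.\ convergence of Lemma~\ref{lemma1} with an equicontinuity bound for $\calJ_n$ over the compact quotient space $\mathcal{SO}(d)_{\mathcal{D}}$, the only cosmetic difference being that you run an explicit finite $\epsilon$-net argument where the paper invokes Arzel\'a--Ascoli. Your extra step --- an auxiliary uniform law of large numbers for $\dVar_n(\psi(\bSnl(\btheta)))$ together with positivity of the population minima, so that the denominators $\sigma_n^{(\ell)}$ are bounded away from zero uniformly in $\btheta$ and in $n$ --- is a more explicit treatment of the same denominator-control issue the paper handles via the a.s.\ convergence $\sigma_n^{(\ell)}(\btheta)\to\sigma^{(\ell)}(\btheta)>0$, and if anything tightens that part of the argument.
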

    
\begin{proof}
We proceed as in \cite{dCovICAmatteson2017}. In their proof they use the Arzelá-Ascoli theorem and the result of Lemma \ref{lemma1} to conclude that it is sufficient to show that
\begin{equation*}
   \lim_{c \to \infty} \overline{\lim_n} \ m_{\frac{1}{c}}(\calJ_n) \overset{a.s.}{=} 0,
\end{equation*}
where
\begin{equation*}
   m_{\frac{1}{c}}(\calJ_n) = \sup \{ |\calJ_n(\btheta) - \calJ_n(\bphi)|: \bU(\btheta),\bU(\bphi) \in \mathcal{SO}(d)_{\mathcal{D}}, ||\bU(\btheta) - \bU(\bphi)||_F < 1/c\}.
\end{equation*}  
Their proof centers around bounding $|\calJ_n(\btheta) -\calJ_n(\bphi)|$ for $\btheta, \bphi \in \Theta$. They find the following (without transforming the random variables, and using $\dCov_n$ instead of $\dCor_n$):
\begin{align*}
   &\hspace{1cm} |\calJ_n(\btheta) -\calJ_n(\bphi)| \leqslant  4 \sum_{\ell = 1}^{d-1} \left( \frac{2}{n} \sum_{i=1}^n || \bs_i(\btheta) -\bs_i(\bphi)|| \right) \times\\ 
   &\left( \binom{n}{2}^{-1}  \sum_{i<j} ||\bs_i(\btheta) - \bs_j(\btheta)|| +  \binom{n}{2}^{-1} \sum_{i<j} ||\bs_i(\bphi) - \bs_j(\bphi)||\right)
\end{align*}
Using the bowl transform does not alter the form of this upper bound, but using $\dCor_n$ does. Every observation is scaled, and we end up with an upper bound of the following form:
\begin{align} \label{eq:dCorLemma3Bound}
  4 \sum_{\ell = 1}^{d-1}& \left( \underbrace{\frac{2}{n} \sum_{i=1}^n || \frac{\psi(\bs_i(\btheta))}{\sigma_n^{(\ell)}(\btheta)} -\frac{\psi(\bs_i(\bphi))}{\sigma_n^{(\ell)}(\bphi)}||}_{(1)} \right) \times \\ 
  &\left( \underbrace{\frac{1}{\sigma_n^{(\ell)}(\btheta)} \binom{n}{2}^{-1} \sum_{i<j} ||\psi(\bs_i(\btheta)) - \psi(\bs_j(\btheta))|| + \frac{1}{\sigma_n^{(\ell)}(\bphi)} \binom{n}{2}^{-1} \sum_{i<j} ||\psi(\bs_i(\bphi)) - \psi(\bs_j(\bphi))||}_{(2)}\right) \nonumber
\end{align}
where
\begin{align*}
   \sigma_n^{(\ell)}(\btheta) &= \sqrt{\dVar_n(\psi(\bSnl(\btheta))) \;\dVar_n(\psi(\bSnlplus(\btheta)))},\\
   \sigma_n^{(\ell)}(\bphi) &= \sqrt{\dVar_n(\psi(\bSnl(\bphi))) \;\dVar_n(\psi(\bSnlplus(\bphi)))}.
\end{align*}
To bound the second factor (2) of Equation \eqref{eq:dCorLemma3Bound}, we use the following:
\begin{itemize}
\item $\left[ \binom{n}{2}^{-1} \sum_{i<j} ||\psi(\bs_i(\btheta)) - \psi(\bs_j(\btheta))|| \right] \overset{a.s.}{\to} \E||\psi(\bs(\btheta)) -\psi(\bs^{\prime}(\btheta))||$ and\\ $\left[ \binom{n}{2}^{-1} \sum_{i<j} ||\psi(\bs_i(\bphi)) - \psi(\bs_j(\bphi))|| \right] \overset{a.s.}{\to} \E||\psi(\bs(\bphi)) -\psi(\bs^{\prime}(\bphi))||$\\ by the SLLN for U-statistics and the boundedness of $\psi$.
\item If we define $\sigma^{(\ell)}(\btheta) = \sqrt{\dVar(\psi(\bS_{\ell}(\btheta))) \;\dVar(\psi(\bS_{\ell +}(\btheta)))}$, then we have $\sigma_n^{(\ell)}(\btheta) \overset{a.s.}{\to} \sigma^{(\ell)}(\btheta)$ and $\sigma_n^{(\ell)}(\bphi) \overset{a.s.}{\to} \sigma^{(\ell)}(\bphi)$ by almost sure convergence of the distance covariance \citep{szekely2007dcor} and the continuous mapping theorem. Here $\sigma^{(\ell)}(\btheta)$ and $\sigma^{(\ell)}(\bphi)$ are strictly greater than zero if the random variables are not degenerate, which is satisfied in the ICA setting.
\item Hence (2) converges to \[ \frac{\E||\psi(\bs(\btheta)) -\psi(\bs^{\prime}(\btheta))||}{\sigma^{(\ell)}(\btheta)} + \frac{\E||\psi(\bs(\bphi)) -\psi(\bs^{\prime}(\bphi))||}{\sigma^{(\ell)}(\bphi)} > 0\]
which behaves well and is bounded.
\end{itemize}
To bound the first factor (1), we proceed as follows:
\begin{align*}
   \frac{2}{n} \sum_{i=1}^n & || \frac{\psi(\bs_i(\btheta))}{\sigma_n^{(\ell)}(\btheta)} -\frac{\psi(\bs_i(\bphi))}{\sigma_n^{(\ell)}(\bphi)}|| = \frac{2}{n} \sum_{i=1}^n || \frac{\psi(\bs_i(\btheta))}{\sigma_n^{(\ell)}(\btheta)} - \frac{\psi(\bs_i(\bphi))}{\sigma_n^{(\ell)}(\btheta)}+ \frac{\psi(\bs_i(\bphi))}{\sigma_n^{(\ell)}(\btheta)}
   -\frac{\psi(\bs_i(\bphi))}{\sigma_n^{(\ell)}(\bphi)}||\\
   &\leqslant \frac{1}{\sigma_n^{(\ell)}(\btheta)} \underbrace{\frac{2}{n} \sum_{i=1}^n || \psi(\bs_i(\btheta)) - \psi(\bs_i(\bphi)) ||}_{(1.a)} + \underbrace{\frac{2\frac{1}{n} \sum_{i=1}^n ||\psi(\bs_i(\bphi))||}{\sigma_n^{(\ell)}(\btheta) \sigma_n^{(\ell)}(\bphi)} \left| \sigma_n^{(\ell)}(\bphi) - \sigma_n^{(\ell)}(\btheta)\right|}_{(1.b)}
\end{align*}
\begin{itemize}
\item The factor $(1.a)$ has the same form as the factors in \cite{dCovICAmatteson2017}, where they show that it shrinks to zero for $||\bU(\btheta) - \bU(\bphi)||$ going to zero.
\item For $(1.b)$, $\frac{\frac{1}{n} \sum_{i=1}^n ||\psi(\bs_i(\bphi))||}{\sigma_n^{(\ell)}(\btheta) \sigma_n^{(\ell)}(\bphi)}$ behaves well as before, and converges to a constant. We however have to check that $\left| \sigma_n^{(\ell)}(\bphi) - \sigma_n^{(\ell)}(\btheta)\right|$ goes to zero for $||\bU(\btheta) - \bU(\bphi)||$ going to zero:
{\footnotesize
\begin{align*}
   \left|\sigma_n^{(\ell)}(\btheta) - \sigma_n^{(\ell)}(\bphi)\right| &= \Big|\sqrt{\dVar_n(\psi(\bSnl(\btheta))) \;\dVar_n(\psi(\bSnlplus(\btheta)))}\\
   &\qquad - \sqrt{\dVar_n(\psi(\bSnl(\bphi))) \;\dVar_n(\psi(\bSnlplus(\bphi)))}\Big|\\
   &\leqslant \sqrt{\dVar_n(\psi(\bSnl(\btheta)))} \;\Big|\sqrt{\dVar_n(\psi(\bSnlplus(\btheta)))} - \sqrt{\dVar_n(\psi(\bSnlplus(\bphi)))}\Big| \\
   &\qquad + \sqrt{\dVar_n(\psi(\bSnlplus(\bphi)))}\Big|\sqrt{\dVar_n(\psi(\bSnl(\btheta)))} - \sqrt{\dVar_n(\psi(\bSnl(\bphi)))}\Big|
\end{align*}}
Here the factors $\sqrt{\dVar_n(\psi(\bSnl(\btheta)))}$ and $\sqrt{\dVar_n(\psi(\bSnlplus(\bphi)))}$ behave well and converge to a positive constant. For the factor $\Big|\sqrt{\dVar_n(\psi(\bSnl(\btheta)))} - \sqrt{\dVar_n(\psi(\bSnl(\bphi)))}\Big|$ and the other similar factor we have
\begin{align*}
   \big|\sqrt{\dVar_n(\psi(\bSnl(\btheta)))} & - 
   \sqrt{\dVar_n(\psi(\bSnl(\bphi)))}\big| \\ 
   & = \frac{\big|\dVar_n(\psi(\bSnl(\btheta))) - 
   \dVar_n(\psi(\bSnl(\bphi)))\big|}
   {\big|\sqrt{\dVar_n(\psi(\bSnl(\btheta)))} + 
   \sqrt{\dVar_n(\psi(\bSnl(\bphi)))}\big|},
\end{align*}
where the denominator behaves well. For the numerator we can use the result of Lemma A.3 of \cite{dCovICAmatteson2017}, stating that this shrinks to zero when $||\bU(\btheta) - \bU(\bphi)||$ goes to zero.
\end{itemize}
Putting this together, (1) shrinks to zero, and hence also the initial Equation \eqref{eq:dCorLemma3Bound}:
\begin{equation*}
    \sup_{||\bU(\btheta)-\bU(\bphi)||_F<\delta} |\calJ_n(\btheta) -\calJ_n(\bphi)| \leqslant \alpha \delta
\end{equation*}
for some constant $\alpha$.

With this result, we can, as in \cite{dCovICAmatteson2017}, apply the Arzelá-Ascoli theorem as $\mathcal{SO}(d)_{\mathcal{D}}$ is separable, which yields:
\begin{equation*}
    \sup_{\btheta: \bU(\btheta) \in \mathcal{SO}(d)_{\mathcal{D}}} |\calJ_n(\btheta) - \calJ(\btheta)| \ \overset{a.s.}{\to} \ 0 \ \text{ for } \ n \to \infty.
\end{equation*}
\end{proof}

\textbf{Proof of Proposition \ref{prop:asconv}.}
Using these three extended lemmas, we can now prove 
Proposition~\ref{prop:asconv}. First note that 
$\forall n$ we have 
$\calJ_n(\btheta_0) \geq \calJ_n(\bhtheta_n)$ and $\calJ(\bhtheta_n) \geq \calJ(\btheta_0)$ because $\bhtheta_n$ is the minimum of $\calJ_n$ and $\btheta_0$ is the minimum of $\calJ$. This yields
\begin{equation*}
    \calJ_n(\btheta_0) - \calJ(\btheta_0) \geq \calJ_n(\bhtheta_n) - \calJ(\btheta_0) \geq \calJ_n(\bhtheta_n) - \calJ(\bhtheta_n).
\end{equation*}
Therefore:
\begin{align*}
    | \calJ_n(\bhtheta_n) - \calJ(\btheta_0) | &\leqslant \max \{|\calJ_n(\btheta_0) - \calJ(\btheta_0)|,|\calJ_n(\bhtheta_n) - \calJ(\bhtheta_n)| \} \\
    &\leqslant \sup_{\bphi: \bU(\bphi) \in \mathcal{SO}(d)_{\mathcal{D}}} |\calJ_n(\bphi) - \calJ(\bphi)|.
\end{align*}
Now Lemma \ref{lemma3} states that this supremum goes to zero almost surely, hence $\calJ_n(\bhtheta_n) \overset{a.s.}{\to} \calJ(\btheta_0)$ for $n \to \infty$. Additionally, we have that $\mathcal{SO}(d)_{\mathcal{D}}$ is compact, therefore the minima of $\calJ$ and $\calJ_n$ exist in  $\mathcal{SO}(d)_{\mathcal{D}}$. Also, the argmin mapping is continuous, yielding $\bU(\bhtheta_n) \overset{a.s.}{\to} \bU(\btheta_0)$ for $n \to \infty$ and $\bU(\btheta_0) \in \mathcal{SO}(d)_{\mathcal{D}}$. If now $\btheta_0$ is in $\overline{\Theta}$, a sufficiently large compact subset of $\Theta$, the continuous mapping theorem gives us $\bhtheta_n \overset{a.s.}{\to} \btheta_0$.
\qed

\subsection{Proof of Proposition \ref{prop:rootncons}}

We follow the lines of Theorem 2.2 of \cite{dCovICAmatteson2017}.

We use the following subset $\Omega$ of ${\cal SO}(d)$ introduced by \cite{chen2005consistent}. First, each row of $\bU$ has Euclidean norm 1. Second, the element with maximal modulus in each row of $\bU$ is positive. Finally, the rows of $\bU$ are sorted according to the partial order $\prec$ that is given by: $\forall \; a,b \in \mathbb{R}^d$, $a \prec b$ if and only if there exists $ k \in \{1,\ldots,d\}$ such that $a_k < b_k$ and $a_j = b_j$ for $j \in \{1,\ldots,k-1\}$. This construction gets rid of identifiability issues as each $\bU$ along with all of its signed permutations corresponds to a single element in $\Omega$.

Now we introduce a path $\gamma$ between two points on the unit ball in $\mathbb{R}^d$ following \cite{dCovICAmatteson2017}. Consider a unit ball in $\mathbb{R}^d$ centered at $A$, which contains the points $B$ and $C$, and let the angle $\xi > 0$ denote the smallest value such that $\cos(\xi) = \langle \vec{AB}, \vec{AC} \rangle$. For $\tau \in \mathbb{R}$, let 
\[
\gamma(\tau) = \cos(\tau)\vec{AB} + \sin(\tau)\vec{AD}
\]
denote a path from $\vec{AB}$ to $\vec{AC}$, in which $\vec{AD}$ is a unit tangent vector at $B$ such that $\vec{AB}, \vec{AC},$ and $\vec{AD}$ are on the same hyperplane. Then, $\gamma(0) = B,\ \gamma(\xi) = C$, and 
\[
\left\| \frac{\partial}{\partial\tau} \gamma(\tau) \right\| = 1; \quad \text{further, } \|\gamma(\tau_2) - \gamma(\tau_1)\| \leqslant |\tau_2 - \tau_1|.
\]

By definition, each row of $\bU \in \Omega$ is on the unit ball in $\mathbb{R}^d$. Let $\xi_1, \ldots, \xi_d$ denote the angles between the corresponding rows of $\bU_0 = \bU_{\theta_0}$ and $\widehat{\bU} =\bU_{\tilde{\theta}_n}$.

Let $\hat{\eta} = \sqrt{ \sum_{k=1}^d \xi_k^2 }$\,, and note that $\| \widehat{\bU} - \bP_{\pm} \bU_0 \| = o_p(1)$ implies $\hat{\eta} = o_p(1)$.

Now let $\gamma : \mathbb{R} \rightarrow \mathbb{R}^{d\times d}$ be such that $\gamma(0) = \bU_0$ and $\gamma(\hat{\eta}) = \widehat{\bU}$, by considering $\gamma(\cdot)$ for the $k$th rows, as described above but rescaled by $\xi_k / \hat{\eta}$. Then we similarly note that 
\[
\left\| \frac{\partial}{\partial\tau} \gamma(\tau) \right\| = \sqrt{ \sum_{k=1}^d \left( \frac{\xi_k}{\hat{\eta}} \right)^2 } = 1, \quad \text{and } \| \gamma(\tau_2) - \gamma(\tau_1) \| \leqslant |\tau_2 - \tau_1|.
\]
As such, for $\| \widehat{\bU} - P_{\pm} \bU_0 \| \leqslant \hat{\eta}$ and sufficiently small $\tau \geq 0$, we note $\gamma(\tau) \in \Omega$.

We will now use
$\mathcal{J}_n(\gamma(\tau))$ to characterize the objective function, where $\gamma(0) = \bU_0$ and $\gamma(\hat{\eta}) = \bUh$. We consider
$$\mathcal{J}_n(\gamma(\tau)) = \sum_{k=1}^{d-1} \dCor_n(\psi(\bSnk(\gamma(\tau))),\psi(\bSnkplus(\gamma(\tau))))\;.$$
Now we consider the first derivative of this objective with respect to $\tau$, $\frac{\partial}{\partial \tau} \mathcal{J}_n(\gamma(\tau))$ and use Taylor's theorem with the mean-value theorem to conclude that there exists a $\bar{\tau} \in [0, \hat{\eta}]$ for which
\[
0 = \frac{\partial}{\partial \tau} \mathcal{J}_n(\gamma(\hat{\eta})) = \frac{\partial}{\partial \tau} \mathcal{J}_n(\gamma(0)) + \hat{\eta} \frac{\partial^2}{\partial \tau^2} \mathcal{J}_n(\gamma(\bar{\tau})),
\]
which implies
\[
\hat{\eta} = - \frac{ \frac{\partial}{\partial \tau} \mathcal{J}_n(\gamma(0)) }{ \frac{\partial^2}{\partial \tau^2} \mathcal{J}_n(\gamma(\bar{\tau})) }. 
\]

First consider the numerator. We know $\mathcal{J}_n$ consists of $d-1$ distance correlations, each of which can be written as a distance covariance (i.e., a sum of three U-statistics) multiplied by a scale factor.  The scale factor converges in probability to its population counterpart as a consequence of the continuous mapping theorem and the a.s. convergence of the distance variance obtained in \cite{szekely2007dcor}, and is thus $\Op(1)$. The derivative of the scale term is also $\Op(1)$. This follows from the chain rule, as the resulting expression is a continuous function of the scale itself, the distance variance, and the first derivative of the distance variance, all of which converge in probability to their population counterparts.\\

We can thus conclude that the scale factors do not affect the  convergence rate. This leaves the derivative of the distance covariance and the distance covariance itself as deciding factors. The first is $\Op(1/\sqrt{n})$ by \cite{dCovICAmatteson2017}, whereas the second is $\Op(1/\sqrt{n})$ by \cite{szekely2007dcor} under the model ICA assumptions. We conclude that each of the distance correlation terms in the numerator is $\Op(1/\sqrt{n})$, and thus the numerator itself as well. Note that we can obtain a similar result under model misspecification provided that $\E\left[\frac{\partial}{\partial \tau} \mathcal{J}_n(\gamma(\tau))\big|_{\tau = 0}\right] = \op(1/\sqrt{n})$.\\

Now consider the denominator $ \frac{\partial^2}{\partial \tau^2} \mathcal{J}_n(\gamma(\bar{\tau}))$. We need to show that this quantity is bounded from below asymptotically, so that it does not interfere with the $\Op(1/\sqrt{n})$ convergence of the numerator. As $\mathcal{J}_n(\gamma(\bar{\tau}))$ is a sum of distance correlations, $\frac{\partial^2}{\partial \tau^2} \widetilde{\mathcal{J}}_n(\gamma(\tau))$  depends on the distance covariances and their first two derivatives, as well as the scale factors and their first two derivatives. Note that all the involved quantities are stochastically bounded. Of the six terms coming out of the second derivative of each distance correlation, all but one contain either 
$\dCov_n$ or $\dCov_n'$, both of which converge to zero in probability under the ICA model. The final term is $\frac{\partial^2}{\partial \tau^2} \dCov_n(\gamma(\tau))$ multiplied by a scale factor. The scale factor converges in probability to its population counterpart (which is a strictly positive number) by the continuous mapping theorem and \cite{szekely2007dcor}. The problem is therefore reduced to bounding the sum of the second derivatives of all $\dCov$-terms from below, which was done in \cite{dCovICAmatteson2017}.

As a result, we obtain

\[
\frac{\partial^2}{\partial \tau^2} \mathcal{J}_n(\gamma(\bar{\tau})) = \min_{\frac{\partial}{\partial \tau} \gamma(0), \frac{\partial^2}{\partial \tau^2} \gamma(0)} \frac{\partial^2}{\partial \tau^2} \mathcal{J}(\gamma(\tau)) + o_p(1),
\]

where $\min_{\frac{\partial}{\partial \tau} \gamma(0), \frac{\partial^2}{\partial \tau^2} \gamma(0)} \frac{\partial^2}{\partial \tau^2} \mathcal{J}(\gamma(\tau)) >0$ given that $\gamma(0)$ is the unique global minimizer and by differentiability and compactness. Putting the results for the numerator and the denominator together, we obtain 
\[\hat{\eta} = \Op(n^{-1/2}).\]
This ends the proof.

\section{Tables with detailed results of 
Section \ref{sec:simulationresults}}
\subsection{Uncontaminated data}

\begin{table}[H]
\vspace{10mm}
\centering
\scalebox{0.9}{\begin{tabular}{rrrrrrrrr}
  \hline
 & RICA & RICA no sweeps & dCovICA 
 & FastICA & Infomax & JADE & RADICAL \\ 
  \hline
a & 11.57 & 11.52 & 2.66 & 5.24 & 2.57 & 3.71 & 4.15 \\ 
  b & 6.04 & 6.03 & 2.67 & 3.89 & 2.94 & 4.83 & 4.10 \\ 
  c & 5.01 & 5.05 & 1.78 & 2.36 & 2.06 & 1.74 & 2.93 \\ 
  d & 16.08 & 16.01 & 4.93 & 5.35 & 4.17 & 5.46 & 7.44 \\ 
  e & 3.80 & 3.80 & 1.55 & 6.53 & 3.70 & 4.17 & 2.83 \\ 
  f & 1.69 & 1.69 & 1.41 & 2.02 & 1.73 & 2.55 & 4.08 \\ 
  g & 2.29 & 2.29 & 1.53 & 1.73 & 1.58 & 1.64 & 5.01 \\ 
  h & 10.37 & 10.33 & 4.14 & 4.14 & 4.01 & 4.44 & 8.59 \\ 
  i & 17.31 & 17.42 & 8.24 & 7.51 & 6.69 & 7.17 & 20.93 \\ 
  j & 3.77 & 3.77 & 1.63 & 50.55 & 46.81 & 6.94 & 3.59 \\ 
  k & 4.58 & 4.58 & 2.32 & 41.36 & 36.52 & 14.84 & 4.47 \\ 
  l & 5.28 & 5.28 & 4.20 & 38.01 & 35.25 & 26.53 & 8.44 \\ 
  m & 3.13 & 4.24 & 28.90 & 6.44 & 4.81 & 3.32 & 1.98 \\ 
  n & 8.22 & 9.27 & 14.22 & 41.28 & 45.85 & 12.10 & 4.93 \\ 
  o & 9.54 & 9.53 & 8.59 & 8.36 & 7.08 & 4.80 & 13.18 \\ 
  p & 2.45 & 2.45 & 1.84 & 20.44 & 14.37 & 3.73 & 2.76 \\ 
  q & 8.06 & 8.66 & 7.84 & 16.83 & 23.49 & 42.86 & 6.14 \\ 
  r & 12.48 & 12.58 & 8.32 & 43.23 & 44.99 & 24.99 & 20.58 \\ 
  \hline
  mean & 7.32 & 7.47 & 5.93 & 16.96 & 16.03 & 9.77 & 7.01 \\ 
   \hline
\end{tabular}}
\caption{Amari error ($\times 100$) for 
  $d=2$, no contamination.}
\end{table}

\begin{table}[H] 
\vspace{10mm}
\centering
\scalebox{0.9}{\begin{tabular}{rrrrrrrrr}
  \hline
 & RICA & RICA no sweeps & dCovICA 
 & FastICA & Infomax & JADE & RADICAL \\ 
  \hline
a & 10.27 & 10.53 & 2.77 & 2.36 & 2.48 & 3.97 & 3.87 \\ 
  b & 5.85 & 5.85 & 4.74 & 2.95 & 3.03 & 4.77 & 4.19 \\ 
  c & 3.87 & 4.42 & 8.19 & 2.09 & 2.08 & 1.83 & 2.15 \\ 
  d & 14.72 & 15.53 & 6.58 & 4.50 & 4.21 & 5.57 & 7.58 \\ 
  e & 2.71 & 2.70 & 1.70 & 4.03 & 3.52 & 4.15 & 2.35 \\ 
  f & 2.33 & 10.16 & 16.49 & 1.73 & 1.78 & 2.57 & 1.86 \\ 
  g & 2.59 & 9.28 & 12.20 & 1.54 & 1.54 & 1.61 & 2.09 \\ 
  h & 8.57 & 9.32 & 12.13 & 4.10 & 4.12 & 4.62 & 13.64 \\ 
  i & 17.41 & 19.74 & 18.80 & 6.34 & 6.84 & 7.15 & 35.48 \\ 
  j & 2.68 & 3.75 & 3.38 & 28.04 & 26.46 & 9.20 & 2.15 \\ 
  k & 3.21 & 5.83 & 2.73 & 30.10 & 29.58 & 17.87 & 3.64 \\ 
  l & 5.17 & 8.05 & 4.48 & 37.13 & 36.89 & 34.21 & 9.57 \\ 
  m & 11.29 & 16.20 & 26.61 & 6.05 & 4.85 & 3.47 & 6.64 \\ 
  n & 28.95 & 35.99 & 41.21 & 40.94 & 42.12 & 18.40 & 30.64 \\ 
  o & 15.41 & 19.17 & 23.32 & 7.52 & 6.81 & 4.78 & 35.22 \\ 
  p & 3.26 & 12.54 & 10.05 & 18.64 & 14.64 & 3.93 & 2.19 \\ 
  q & 19.24 & 26.57 & 33.87 & 22.18 & 33.24 & 45.75 & 27.90 \\ 
  r & 14.22 & 19.02 & 16.52 & 41.28 & 41.66 & 32.70 & 35.32 \\ 
  \hline
  mean & 9.54 & 13.04 & 13.65 & 14.53 & 14.77 & 11.47 & 12.58 \\ 
   \hline
\end{tabular}}
\caption{Amari error ($\times 100$) for $d=4$, 
  no contamination.}
\end{table}

\begin{table}[H] 
\vspace{10mm}
\centering
\scalebox{0.9}{\begin{tabular}{rrrrrrrrr}
  \hline
 & RICA & RICA no sweeps & dCovICA 
 & FastICA & Infomax & JADE & RADICAL \\ 
  \hline
a & 6.69 & 7.14 & 1.91 & 1.68 & 1.65 & 2.81 & 2.67 \\ 
  b & 3.86 & 3.77 & 5.64 & 2.00 & 2.08 & 3.36 & 2.95 \\ 
  c & 1.70 & 7.52 & 13.67 & 1.49 & 1.41 & 1.26 & 1.34 \\ 
  d & 9.71 & 11.47 & 5.42 & 2.86 & 2.75 & 4.14 & 5.10 \\ 
  e & 2.00 & 2.01 & 1.18 & 2.86 & 2.41 & 2.98 & 1.63 \\ 
  f & 1.61 & 16.33 & 17.25 & 1.18 & 1.21 & 1.72 & 1.33 \\ 
  g & 1.83 & 16.36 & 15.94 & 1.07 & 1.06 & 1.11 & 1.40 \\ 
  h & 4.75 & 12.15 & 19.15 & 2.73 & 2.75 & 3.19 & 13.36 \\ 
  i & 14.80 & 21.88 & 26.49 & 4.61 & 4.45 & 5.08 & 37.00 \\ 
  j & 1.91 & 7.14 & 6.50 & 18.43 & 16.96 & 5.09 & 1.86 \\ 
  k & 2.23 & 10.22 & 5.18 & 24.38 & 22.06 & 14.03 & 2.75 \\ 
  l & 3.67 & 10.16 & 5.29 & 32.63 & 32.28 & 29.77 & 6.29 \\ 
  m & 13.28 & 18.03 & 25.63 & 3.84 & 3.36 & 2.45 & 15.98 \\ 
  n & 36.86 & 38.46 & 39.15 & 37.49 & 36.89 & 11.61 & 36.46 \\ 
  o & 12.48 & 20.11 & 22.65 & 5.33 & 4.96 & 3.49 & 37.46 \\ 
  p & 1.99 & 18.06 & 16.46 & 12.31 & 8.95 & 2.69 & 1.50 \\ 
  q & 30.41 & 34.45 & 35.57 & 17.54 & 30.49 & 41.89 & 33.53 \\ 
  r & 12.37 & 24.65 & 20.19 & 39.62 & 39.37 & 29.65 & 37.69 \\ 
  \hline
  mean & 9.01 & 15.55 & 15.74 & 11.78 & 11.95 & 9.24 & 13.35 \\ 
   \hline
\end{tabular}}
\caption{Amari error ($\times 100$) for $d=6$, 
   no contamination.}
\end{table}

\subsection{Clustered contamination}

\begin{table}[H] 
\vspace{10mm}
\centering
\scalebox{0.9}{\begin{tabular}{rrrrrrrrr}
   \hline
 & RICA & RICA no sweeps & dCovICA 
 & FastICA & Infomax & JADE & RADICAL \\ 
  \hline
a & 7.77 & 7.96 & 54.98 & 80.17 & 77.88 & 64.78 & 85.76 \\ 
  b & 4.51 & 4.45 & 55.78 & 84.11 & 83.43 & 84.71 & 84.87 \\ 
  c & 2.71 & 2.71 & 56.72 & 85.45 & 84.43 & 86.64 & 87.35 \\ 
  d & 15.65 & 14.36 & 56.04 & 83.35 & 82.20 & 81.16 & 85.62 \\ 
  e & 3.51 & 3.51 & 55.00 & 85.41 & 85.13 & 84.44 & 3.32 \\ 
  f & 1.83 & 1.83 & 57.00 & 85.36 & 84.45 & 86.41 & 3.99 \\ 
  g & 1.89 & 1.89 & 57.04 & 84.91 & 84.07 & 86.06 & 9.52 \\ 
  h & 22.27 & 21.85 & 56.60 & 85.09 & 84.11 & 86.36 & 87.04 \\ 
  i & 38.91 & 37.82 & 56.54 & 85.34 & 84.37 & 86.35 & 86.98 \\ 
  j & 5.96 & 5.96 & 56.30 & 86.24 & 85.56 & 86.85 & 7.15 \\ 
  k & 3.91 & 3.91 & 56.18 & 85.14 & 84.21 & 85.81 & 77.49 \\ 
  l & 5.22 & 5.32 & 56.28 & 85.28 & 84.42 & 86.27 & 86.08 \\ 
  m & 8.28 & 10.54 & 56.58 & 85.39 & 84.41 & 86.50 & 87.30 \\ 
  n & 22.63 & 21.57 & 56.39 & 85.45 & 84.46 & 86.83 & 87.13 \\ 
  o & 25.79 & 24.16 & 56.46 & 85.16 & 84.15 & 86.44 & 87.04 \\ 
  p & 2.91 & 3.43 & 56.48 & 85.29 & 84.28 & 86.40 & 84.84 \\ 
  q & 16.82 & 16.31 & 56.25 & 84.88 & 84.04 & 86.12 & 86.32 \\ 
  r & 14.14 & 13.65 & 56.28 & 85.22 & 84.26 & 86.41 & 86.97 \\ 
  \hline
  mean & 11.37 & 11.18 & 56.27 & 84.85 & 83.88 & 84.70 & 68.04 \\ 
   \hline
\end{tabular}}
\caption{Amari error ($\times 100$) for $d=2$, 
  clustered contamination.}
\end{table}

\begin{table}[H] 
\vspace{10mm}
\centering
\scalebox{0.9}{\begin{tabular}{rrrrrrrrr}
  \hline
 & RICA & RICA no sweeps & dCovICA 
 & FastICA & Infomax & JADE & RADICAL \\ 
  \hline
a & 10.74 & 16.16 & 35.13 & 41.39 & 38.48 & 33.83 & 44.67 \\ 
  b & 4.58 & 12.20 & 36.11 & 42.33 & 41.35 & 43.36 & 45.11 \\ 
  c & 16.76 & 25.71 & 36.47 & 42.98 & 41.94 & 45.24 & 47.63 \\ 
  d & 18.36 & 19.73 & 37.22 & 43.29 & 41.98 & 41.90 & 46.06 \\ 
  e & 2.60 & 2.77 & 31.55 & 41.88 & 41.10 & 43.52 & 43.60 \\ 
  f & 5.73 & 26.60 & 37.17 & 42.65 & 41.23 & 45.44 & 51.37 \\ 
  g & 2.40 & 19.04 & 35.53 & 41.88 & 40.70 & 44.90 & 51.20 \\ 
  h & 49.50 & 49.57 & 38.95 & 43.42 & 42.40 & 45.64 & 52.02 \\ 
  i & 50.92 & 49.21 & 40.78 & 45.11 & 44.89 & 47.38 & 52.98 \\ 
  j & 2.95 & 5.72 & 32.74 & 47.89 & 46.19 & 45.37 & 33.39 \\ 
  k & 3.17 & 7.28 & 34.18 & 48.71 & 46.95 & 46.45 & 44.63 \\ 
  l & 6.41 & 12.28 & 35.82 & 49.60 & 48.92 & 49.34 & 47.93 \\ 
  m & 47.02 & 46.79 & 40.37 & 43.77 & 42.60 & 44.96 & 51.78 \\ 
  n & 32.46 & 31.90 & 44.84 & 50.30 & 49.44 & 50.54 & 54.41 \\ 
  o & 47.39 & 47.17 & 41.67 & 45.86 & 44.43 & 46.62 & 55.84 \\ 
  p & 5.60 & 17.57 & 36.36 & 47.57 & 45.89 & 46.32 & 49.42 \\ 
  q & 23.72 & 28.59 & 43.22 & 50.07 & 48.81 & 52.24 & 54.90 \\ 
  r & 26.28 & 29.54 & 38.95 & 52.11 & 51.35 & 51.09 & 53.04 \\ 
  \hline
  mean & 19.81 & 24.88 & 37.61 & 45.60 & 44.37 & 45.79 & 48.89 \\ 
   \hline
\end{tabular}}
\caption{Amari error ($\times 100$) for $d=4$, 
   clustered contamination.}
\end{table}

\begin{table}[H] 
\vspace{10mm}
\centering
\scalebox{0.9}{\begin{tabular}{rrrrrrrrr}
  \hline
 & RICA & RICA no sweeps & dCovICA 
 & FastICA & Infomax & JADE & RADICAL \\ 
  \hline
a & 11.04 & 17.12 & 24.35 & 26.69 & 25.17 & 21.39 & 29.33 \\ 
  b & 3.15 & 14.18 & 25.82 & 27.67 & 26.75 & 28.25 & 29.60 \\ 
  c & 37.03 & 40.71 & 28.05 & 27.57 & 26.83 & 29.91 & 30.52 \\ 
  d & 15.01 & 19.14 & 26.49 & 27.90 & 27.13 & 27.28 & 30.02 \\ 
  e & 1.89 & 3.05 & 22.23 & 26.32 & 27.45 & 26.89 & 28.93 \\ 
  f & 38.34 & 41.02 & 35.23 & 27.48 & 26.51 & 30.09 & 32.30 \\ 
  g & 30.99 & 38.41 & 32.91 & 26.90 & 25.61 & 29.90 & 31.11 \\ 
  h & 45.29 & 47.98 & 32.85 & 28.65 & 27.85 & 30.15 & 40.48 \\ 
  i & 45.40 & 46.18 & 36.18 & 30.02 & 29.35 & 31.56 & 46.36 \\ 
  j & 1.98 & 10.18 & 25.85 & 35.90 & 36.47 & 29.51 & 26.87 \\ 
  k & 2.30 & 14.04 & 26.05 & 41.84 & 41.59 & 31.32 & 29.52 \\ 
  l & 6.50 & 21.74 & 26.94 & 42.80 & 42.32 & 37.15 & 35.31 \\ 
  m & 43.36 & 46.10 & 34.10 & 29.20 & 28.39 & 29.85 & 47.41 \\ 
  n & 38.88 & 38.45 & 40.42 & 44.45 & 42.86 & 38.75 & 49.84 \\ 
  o & 43.53 & 45.88 & 36.08 & 30.06 & 29.27 & 30.25 & 47.85 \\ 
  p & 22.04 & 29.43 & 30.21 & 32.22 & 30.55 & 30.13 & 35.94 \\ 
  q & 29.76 & 33.02 & 41.23 & 42.74 & 43.90 & 47.23 & 48.76 \\ 
  r & 31.53 & 32.70 & 33.48 & 44.99 & 44.34 & 42.24 & 47.38 \\ 
  \hline
  mean & 24.89 & 29.96 & 31.03 & 32.97 & 32.35 & 31.77 & 37.09 \\ 
   \hline
\end{tabular}}
\caption{Amari error ($\times 100$) for $d=6$, 
   clustered contamination.}
\end{table}

\subsection{Multiplicative contamination}

\begin{table}[H] 
\vspace{10mm}
\centering
\scalebox{0.9}{\begin{tabular}{rrrrrrrrr}
  \hline
 & RICA & RICA no sweeps & dCovICA 
 & FastICA & Infomax & JADE & RADICAL \\ 
  \hline
a & 10.14 & 10.12 & 65.34 & 64.88 & 74.03 & 75.99 & 30.45 \\ 
  b & 6.07 & 6.11 & 88.35 & 75.43 & 83.39 & 84.03 & 67.03 \\ 
  c & 30.39 & 30.54 & 3.25 & 41.03 & 45.07 & 86.88 & 9.03 \\ 
  d & 18.46 & 18.52 & 78.89 & 72.30 & 80.26 & 81.45 & 69.82 \\ 
  e & 2.20 & 2.20 & 11.32 & 34.18 & 22.31 & 22.13 & 2.60 \\ 
  f & 5.83 & 5.88 & 31.51 & 84.60 & 85.86 & 85.79 & 3.30 \\ 
  g & 6.04 & 6.04 & 3.80 & 45.84 & 57.41 & 86.60 & 8.16 \\ 
  h & 19.53 & 19.52 & 92.78 & 83.37 & 85.78 & 85.85 & 84.30 \\ 
  i & 22.60 & 22.49 & 92.55 & 83.94 & 85.80 & 86.04 & 84.42 \\ 
  j & 4.08 & 4.08 & 2.65 & 84.43 & 85.65 & 85.17 & 2.42 \\ 
  k & 3.73 & 3.73 & 5.82 & 82.60 & 85.26 & 85.44 & 13.00 \\ 
  l & 5.24 & 5.24 & 81.62 & 81.91 & 85.33 & 85.70 & 82.66 \\ 
  m & 49.17 & 49.75 & 94.41 & 85.16 & 86.24 & 86.34 & 5.60 \\ 
  n & 16.96 & 19.24 & 93.27 & 83.41 & 85.78 & 86.11 & 84.36 \\ 
  o & 29.93 & 30.29 & 93.10 & 83.98 & 85.63 & 86.12 & 84.58 \\ 
  p & 3.56 & 3.88 & 77.66 & 85.44 & 86.54 & 86.68 & 2.90 \\ 
  q & 8.78 & 9.88 & 92.74 & 81.52 & 85.24 & 85.85 & 84.52 \\ 
  r & 10.67 & 11.57 & 92.13 & 82.53 & 85.57 & 85.98 & 84.41 \\ 
  \hline
  mean & 14.08 & 14.39 & 61.18 & 74.25 & 77.29 & 81.56 & 44.64 \\ 
   \hline
\end{tabular}}
\caption{Amari error ($\times 100$) for $d=2$, 
   multiplicative contamination.}
\end{table}

\begin{table}[H] 
\vspace{10mm}
\centering
\scalebox{0.9}{\begin{tabular}{rrrrrrrrr}
  \hline
 & RICA & RICA no sweeps & dCovICA 
 & FastICA & Infomax & JADE & RADICAL \\ 
  \hline
a & 7.92 & 9.22 & 38.87 & 41.97 & 47.73 & 54.20 & 7.53 \\ 
  b & 4.96 & 5.36 & 50.90 & 45.70 & 55.36 & 59.78 & 19.52 \\ 
  c & 3.04 & 3.64 & 58.33 & 60.30 & 64.11 & 63.49 & 29.31 \\ 
  d & 12.91 & 15.06 & 46.97 & 44.14 & 54.09 & 58.61 & 30.14 \\ 
  e & 3.05 & 2.98 & 11.53 & 30.48 & 20.40 & 34.25 & 2.48 \\ 
  f & 2.07 & 5.18 & 50.29 & 52.92 & 59.82 & 61.95 & 1.98 \\ 
  g & 2.03 & 3.99 & 50.56 & 59.48 & 62.63 & 62.34 & 2.13 \\ 
  h & 8.70 & 11.55 & 57.37 & 51.47 & 58.63 & 62.77 & 51.82 \\ 
  i & 19.71 & 23.14 & 56.67 & 50.63 & 59.29 & 62.70 & 54.12 \\ 
  j & 2.91 & 4.26 & 20.44 & 52.00 & 58.87 & 59.23 & 2.54 \\ 
  k & 3.41 & 6.49 & 43.83 & 50.55 & 58.22 & 60.06 & 11.33 \\ 
  l & 5.69 & 9.65 & 52.27 & 50.12 & 58.07 & 60.97 & 48.80 \\ 
  m & 10.21 & 13.86 & 57.63 & 53.04 & 59.37 & 62.58 & 41.67 \\ 
  n & 35.50 & 40.34 & 57.19 & 52.17 & 60.05 & 64.87 & 53.52 \\ 
  o & 14.84 & 18.29 & 56.40 & 51.62 & 59.62 & 63.85 & 54.22 \\ 
  p & 3.35 & 11.92 & 56.45 & 52.32 & 59.08 & 62.43 & 22.59 \\ 
  q & 30.71 & 33.76 & 56.16 & 49.95 & 57.66 & 62.36 & 52.23 \\ 
  r & 18.12 & 23.57 & 55.09 & 50.01 & 58.07 & 62.28 & 52.39 \\ 
  \hline
  mean & 10.51 & 13.46 & 48.72 & 49.94 & 56.17 & 59.93 & 29.91 \\ 
   \hline
\end{tabular}}
\caption{Amari error ($\times 100$) for $d=4$, 
  multiplicative contamination.}
\end{table}

\begin{table}[H] 
\vspace{10mm}
\centering
\scalebox{0.9}{\begin{tabular}{rrrrrrrrr}
  \hline
 & RICA & RICA no sweeps & dCovICA 
& FastICA & Infomax & JADE & RADICAL \\ 
  \hline
a & 6.63 & 9.08 & 37.17 & 40.26 & 44.52 & 49.63 & 4.94 \\ 
  b & 3.71 & 6.40 & 46.26 & 41.86 & 51.30 & 56.25 & 12.84 \\ 
  c & 1.82 & 8.01 & 51.31 & 47.13 & 54.28 & 57.63 & 42.22 \\ 
  d & 9.41 & 15.49 & 42.69 & 40.86 & 49.01 & 53.10 & 24.38 \\ 
  e & 2.27 & 2.23 & 15.21 & 29.37 & 20.17 & 36.65 & 1.71 \\ 
  f & 4.36 & 20.62 & 47.26 & 44.13 & 52.59 & 56.90 & 2.10 \\ 
  g & 1.71 & 15.61 & 49.33 & 47.05 & 53.38 & 58.12 & 1.65 \\ 
  h & 27.62 & 31.20 & 49.18 & 43.38 & 52.75 & 57.69 & 47.20 \\ 
  i & 39.89 & 39.45 & 49.93 & 44.24 & 53.56 & 57.44 & 47.08 \\ 
  j & 1.99 & 8.39 & 39.51 & 44.15 & 51.14 & 54.92 & 2.04 \\ 
  k & 2.36 & 12.01 & 46.43 & 44.37 & 50.78 & 55.45 & 12.44 \\ 
  l & 5.07 & 18.26 & 47.30 & 43.50 & 52.47 & 56.64 & 42.88 \\ 
  m & 23.73 & 27.44 & 50.09 & 44.49 & 53.89 & 58.30 & 47.99 \\ 
  n & 39.54 & 40.70 & 50.11 & 43.86 & 52.76 & 57.52 & 46.35 \\ 
  o & 33.12 & 34.27 & 50.29 & 43.89 & 53.38 & 57.57 & 47.31 \\ 
  p & 5.50 & 21.64 & 50.19 & 45.32 & 53.50 & 57.59 & 43.04 \\ 
  q & 32.56 & 35.54 & 49.30 & 43.74 & 53.36 & 58.54 & 46.78 \\ 
  r & 27.27 & 31.39 & 48.86 & 44.20 & 52.92 & 57.78 & 46.56 \\ 
  \hline
  mean & 14.92 & 20.98 & 45.58 & 43.10 & 50.32 & 55.43 & 28.86 \\ 
   \hline
\end{tabular}}
\caption{Amari error ($\times 100$) for $d=6$, 
   multiplicative contamination.}
\end{table}

\end{document}